\newtheorem{definition}{\textbf{Definition}}
\newtheorem{theorem}{\textbf{Theorem}}
\newtheorem{proposition}{\textbf{Proposition}}
\newtheorem{problem}{\textbf{Problem}}
\newtheorem{lemma}{\textbf{Lemma}}
\newtheorem{remark}{\textbf{Remark}}
\newcommand\oprocendsymbol{\hbox{$\square$}}
\newcommand\oprocend{\relax\ifmmode\else\unskip\hfill\fi\oprocendsymbol}
\definecolor{bleudefrance}{rgb}{0.19, 0.55, 0.91}
\definecolor{ao(english)}{rgb}{0.0, 0.5, 0.0}
\newcommand{\enrique}[1]{  \ifthenelse{\boolean{showcomments}}
{\todo[inline,color=bleudefrance]{Enrique says: #1}}{}}
\newcommand{\har}[1]{  \ifthenelse{\boolean{showcomments}}
{\todo[inline,color=ao(english)]{Haralampos says: #1}}{}}
\newcommand{\addcite}[0]{\ifthenelse{\boolean{showcomments}}
{\textcolor{purple}{(add cite(s)) }}{}}%
\newcommand{\aem}[1]{
\ifthenelse{\boolean{showedits}}
{\added[id=EM]{#1}}
{\!#1\hspace{-4.5pt}}
}
\newcommand{\rem}[2]{
\ifthenelse{\boolean{showedits}}
{\replaced[id=EM]{#1}{#2}}
{\!#1\hspace{-4.5pt}}
}
\newcommand{\dem}[1]{
\ifthenelse{\boolean{showedits}}
{\deleted[id=EM]{#1}}
{}
}
\newcommand{\R}{\mathbb{R}}
\title{\LARGE \bf
Voltage Collapse Stabilization: A Game Theory Viewpoint
}
\author{Charalampos Avraam$^*$, Jesse Rines$^*$, Aurik Sarker$^*$, Fernando Paganini$^\dagger$, and Enrique Mallada$^*$ \thanks{$^*$~C. Avraam, J. Rines, A. Sarker, and E. Mallada are with The Johns Hopkins University, Baltimore, Maryland, USA. E-mail:
{\tt\footnotesize \{cavraam1,jrines1,asarker1,mallada\}@jhu.edu}.
}%
\thanks{
$^\dagger$~F. Paganini is with the Universidad ORT Uruguay, Montevideo. E-mail: {\tt\footnotesize paganini@ort.edu.uy}. }
}
\begin{document}

\setlist[itemize]{leftmargin=*}

\maketitle
\thispagestyle{empty}
\pagestyle{plain}

\begin{abstract}
Voltage collapse is a type of blackout-inducing dynamic instability that occurs when the power demand exceeds the maximum power that can be transferred through the network. The traditional (preventive) approach to avoid voltage collapse is based on ensuring that the network never reaches its maximum capacity. However, such an approach leads to inefficiencies as it prevents operators to fully utilize the network resources and does not account for unprescribed events.
To overcome this limitation, this paper seeks to initiate the study of \emph{voltage collapse stabilization}.

More precisely, for a DC network, we formulate the problem of voltage stability as a dynamic problem where each load seeks to achieve a constant power consumption by updating its conductance as the voltage changes. We show that such a system can be interpreted as a dynamic game, where each player (load) seeks to myopically maximize their utility, and where every stable power flow solution amounts to a Local Nash Equilibrium.

Using this framework, we show that voltage collapse is equivalent to the non-existence of a Local Nash Equilibrium in the game and, as a result, it is caused by the lack of cooperation between loads. Finally, we propose a \emph{Voltage Collapse Stabilizer} (VCS) controller that uses (flexible) loads that are willing to cooperate and provides a fair allocation of the curtailed demand. Our solution stabilizes voltage collapse even in the presence of non-cooperative loads. 
Numerical simulations validate several features of our controllers.
\end{abstract}

\section{Introduction}\label{sec:intro}
Voltage collapse (VC) is a type of outage in power networks that arises when the aggregate power demand exceeds the capacity of the network to transfer the required power \cite{kundur1994},\cite{vvc1998},\cite{cigre2004}. When such a point is achieved, (inflexible) constant power loads tend to rapidly reduce their effective impedance bringing the voltage abruptly to zero.
While this mechanism is intrinsically dynamic, associated with a saddle node bifurcation \cite{dobson1989},\cite{dobson1993}, the inability to correct this behavior from the generation side has lead power engineers to take a rather static (preventive) approach to address it. That is, to ensure that the point of maximum network loading is never reached \cite{iris1997}.
%
As a consequence, there has been a vast body of work trying to quantify voltage stability margins. This includes classical works, such as \cite{obadina1988}, \cite{tvc1991},\cite{dobson1994},\cite{bompard1996}  and more recently, \cite{vournas2016}, \cite{simpson-porco2016}. However, this approach leads to inefficiencies as it prevents operators to fully utilize the network resources and does not account for unprescribed events that can still produce a blackout.  

This work seeks to initiate the study of \emph{voltage collapse stabilization}. More precisely, we aim to investigate how to use (flexible) demand response to reduce consumption to match network capacity --when the total demand exceeds it-- and prevent inflexible demand from driving the system to voltage collapse. To the best of our knowledge, this work is the first effort on addressing the dynamic aspect of voltage collapse to design controllers aimed at preventing it. Such control schemes requires to overcome two main challenges. 
Firstly, it needs to stabilize an operating point that under inflexible load behavior is unstable.\footnote{At a saddle node bifurcation a stable and an unstable equilibria are merged, which leads to an unstable equilibrium \cite{khalil2002nonlinear}.} Secondly, it needs to prevent collapse even in the presence of inflexible loads.

The work is motivated by the rapid development of power electronics and information technology \cite{bayoumi2015} that, for the first time since the power system inception, has the potential to provide enough demand-side controllability that could allow to envision the possibility of stabilizing voltage collapse. However, despite the additional flexibility that controllable demand provides, there are numerous questions that remain to be answered. Among them:
\begin{itemize}
\item Is voltage collapse stabilization possible? 
\item Can stabilization be achieved via decentralized actions? 
\item {How should we allocate the necessary demand reduction among the flexible loads?}
\end{itemize}

In this paper, we build a game theoretic framework to  investigate these questions in the context of direct current (DC) networks. More precisely, we consider a star resistive DC network where each load seeks to consume a constant power by dynamically updating its conductance using a standard voltage droop. We show that such system can be interpreted as a dynamic game, where each player (load) seeks to (locally) maximize its utility, and where every stable power flow solution amounts to a Local Nash Equilibrium (LNE) (Section \ref{sec:vc-as-game}). Interestingly, voltage collapse can then be interpreted as the consequence of selfish actions of non-cooperative demand, which leads to the need of introducing coordination to overcome it.

The rest of the paper is organized as follows. Section \ref{sec:prelim} introduces our DC network model of constant power loads as well as some required game theory terminology.
Section \ref{sec:vc-as-game} frames our network model as a load satisfiability game where the unique Nash Equilibrium (NE) is the voltage collapse state, and shows that stable power flow solutions are LNE that prevent myopic players to reach their NE. Section \ref{sec:vc-stab-control} describes our voltage collapse stabilizer controller and studies its static and dynamic properties. 
We illustrate several features of our controllers using numerical simulations in Section \ref{sec:simulations} and conclude in Section \ref{sec:conclusions}.

\section{Preliminaries}\label{sec:prelim}
In this section we introduce the network model to be considered in this paper as well as the game theoretic framework to be used.
\subsection{DC Power Network Model}
\begin{figure}[htp]
  \centering
  \includegraphics[width=.9\columnwidth]{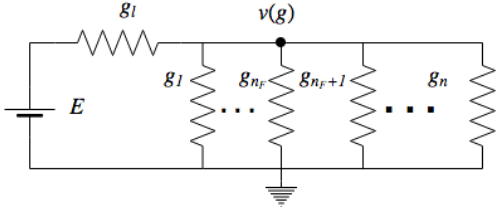}\hspace*{-1.0cm}			   
  \caption{Star DC Network with Several Dynamic loads}
  \label{fig:nbs}
 \end{figure}
We consider the star DC network model described in Figure \ref{fig:nbs},
where $E$ denotes the source voltage, and $g_l$ the conductance of a transmission line that transfer power to $n$ loads and $g_i$ denotes the $i$th load conductance, $i\in N:=\{1,\dots,n\}$. We consider two types of loads, the \textit{flexible} loads, 
belonging to the set $F=\{1,...,n_F\}$ ($n_{\text{F}}=|F|$), and the \textit{inflexible} loads, belonging to $I=\{n_F+1,...,n\}$, $n_I=|I|$. Hence, the set of all loads is $N=I\cup F=\{1,\dots, n\}$. 
We further use $g=(g_1,\dots, g_n)\in\mathbb R_{\geq0}^n$ to denote the vector of conductances
and $g_{-i}\in\mathbb R^{n-1}$ the vector of all load conductances except $g_i$. 

Using this notation, we can use Kirchoff's voltage and current laws (KVL and KCL) to compute the voltage applied to each load
\begin{equation}\label{eq:vg}
v(g)=\frac{Eg_l}{\sum_{i\in N}g_i+g_l}.
\end{equation}
Thus, the total power consumed by each load $i\in N$ becomes
\begin{equation}\label{eq:Pi}
P_i(g)=v^2(g)g_i
=\left(\frac{Eg_l}{g_{\text{eq}}(g)+g_l}\right)^2g_i,
\end{equation}
where $g_{\text{eq}}(g)=\sum_{i\in N}g_i$ is the equivalent conductance. The difference between the power consumed by each load $i\in N$ and its nominal demand $P_{0,i}$ is
\begin{equation}\label{eq:DPi}
\Delta P_i(g) = P_i(g) - P_{0,i}.
\end{equation}

The total power consumed by all the loads in the system is
\begin{equation}\label{eq:Ptot}
P_{\text{tot}} (g)=\sum_{i=1}^nP _i(g)
= \frac{(Eg_l)^2}{(g_{\text{eq}}(g)+g_l)^2}g_{\text{eq}}(g).
\end{equation}
For an arbitrary set $S \subset N$, the aggregate power consumed by every $i\in S$ is
\begin{equation}\label{eq:Ps}
P_{S} (g)=\sum_{i\in S}P _i(g)
= \frac{(Eg_l)^2}{(g_{\text{eq}}(g)+g_l)^2}g_{S},
\end{equation}
where $g_{S}=\sum_{i\in S}g_i$. Notice that if $S^c:=N \backslash S$ and $g_{S^c} =\sum_{i\in S^c}g_i$, then:
$g_{\text{eq}}(g)= g_{S}+g_{S^c}$.
\begin{definition}[Voltage Collapse]
The system \eqref{eq:infl} undergoes voltage collapse whenever $v(g(t))\rightarrow 0$  as $t\rightarrow+\infty$.
\end{definition}
\vspace{1ex}
\noindent
{ \textbf{Network Capacity} ($P_{\text{S},\max}$):}~Since voltage collapse is the result of the network reaching its maximum capacity~\cite{vvc1998}, it is of interest to compute the maximum value that $P_{\text{S}}(g)$ in (\ref{eq:Ps}) can achieve for fixed value of $g_{S^c}$.

A straightforward calculation shows that for all $i\in S$
\begin{equation}\label{eq:dPidgi}
\frac{\partial}{\partial g_i}P_i(g)=\frac{(Eg_l)^2}{(g_{{\text{eq}}}(g)+g_l)^3}\left(g_l +g_{\text{eq}}(g)-2g_i\right),
\end{equation}
which implies that
\begin{equation}\label{eq:dPsdgs}
\frac{\partial}{\partial g_S}P_{\text{S}}(g)=\frac{(Eg_l)^2}{(g_{\text{eq}}(g)+g_l)^3}\left( g_l+{g}_{S^c}-g_{S} \right).
\end{equation}

From \eqref{eq:dPsdgs}, it is easy to see that $P_S(g_S;g_{S^c})$ is an increasing function of $g_S$ up until $g_S=g_l+g_{S^c}$, and decreasing for all $g_S>g_l+g_{S^c}$.
Therefore, the maximum power that can be supplied to the loads in $S$ is given by
\begin{align}\label{eq:Ps,max}
P_{\text{S,max}}=P_{S}(g_S^*;g_{S^c})=\frac{E^2g_l}{4}\frac{g_l}{g_l+{g}_{S^c}}
\end{align}
and is achieved whenever $g_{S}^*=\sum_{i\in S}g^*_i =g_l+{g}_{S^c}$. 

In the special case where $S=N$, (\ref{eq:Ps,max}) becomes: 
\begin{align}\label{eq:Pmax}
P_\max=P_{\text{tot}}(g_S^*;g_{S^c})=\frac{E^2g_l}{4}
\end{align}
and it is achieved by  $g^*_N=g_{\text{eq}}(g^*)=\sum_{i\in N}g^*_i =g_l$. 

\vspace{1ex}
\noindent
{\bf Dynamic Load Model:}~We assume that each load $i\in N$ has a constant power demand $P_{0,i}$. For an \emph{inflexible load} $i\in I$, this demand $P_{0,i}$ must always be satisfied. This is achieved by dynamically changing the conductance $g_i$ in order to change the power consumption $P_i(g)$. Following \cite{vvc1998}, we use the following dynamic model
\begin{align}\label{eq:infl}
\dot{g}_i =&  - ( v^2(g)g_i - P_{0,i} ) = - \Delta P_i(g), & i\in I.
\end{align}

For the case of \emph{flexible loads}, we assume that although they aim to satisfy their own constant power demand $P_{0,i}$, at the same time they are willing to consume less than $P_{0,i}$ whenever $P_{0,\text{tot}}:=\sum_{i\in N}P_{0,i}>P_\max$. Thus, our goal it to design a control law

\begin{equation}\label{eq:fl}
\dot{g}_i = u_i, \quad 	 i\in F,
\end{equation}
where the input $u_i$ is such that in equilibrium 
$\Delta P_i(g)=0$ whenever $P_{0,\text{tot}}<P_{\text{max}}$.

\vspace{1ex}
\noindent
{\bf Power Flow Solutions:} Given an equilibrium $g^*$ of \eqref{eq:infl}-\eqref{eq:fl}, there exists a unique voltage $v(g^*)$ and power consumption $P(g^*)=(P_i(g^*)$, $i\in N)$. The pair $(v,P)$ is referred as \emph{power flow solution}. Thus, given the one-to-one relationship between $g$ and the pair $(v,P)$, we refer to $g^*$ as a power flow solution.

\subsection{Game Theory}

We now present the game theoretical preliminaries that will allows us to 
better grasp the level of coordination required to prevent voltage collapse. 

\begin{definition}[Normal Form Game \cite{fudtir1991game}]\label{def:noncoopgame}
A Normal Form Game is given by the triple $\langle N,S,\mathbcal{u}\rangle$ where:
\begin{enumerate}
	\item $N=\{1,...,n\}$, is the set of players.
	\item $S:=S_1\times...\times S_n$, with $S_i$ being the strategy set of player $i\in N$, is the set of strategies.
	\item $\mathbcal{u}=\{\mathbcal{u}_i,i\in N\}$, where
    $\mathbcal{u}_i : S:=S_1\times...\times S_n \rightarrow \mathbb{R}$, $\forall i\in N$, is the set of payoff functions.
\end{enumerate}
\end{definition}

Given a game $\langle N,S,\mathbcal{u} \rangle$ we seek to understand the set of strategies $s=(s_1,\dots,s_n)\in S$ for which every player has no incentive to move. Moreover, since in our context it is in general difficult to understand the best response of each player, we focus on locally optimal strategies.

\begin{definition}[Nash Equlibirum \cite{fudtir1991game}]\label{def:ne}
A strategy $s^*=(s_1,..,s_n)\in S$ is a (strict) Nash Equilibrium (NE) if and only if for each $i\in N$ \begin{align}\label{eq:ne}
	& \mathbcal{u}_i(s_i^*,s_{-i}^*) > \mathbcal{u}_i(s_i,s_{-i}^*), \qquad \forall s_i\in S_i. 
\end{align}
\end{definition}
\vspace*{4pt}

\begin{definition}[Local Nash Equlibrium \cite{ratliff2016}]\label{def:localnash}
A strategy $s^*=(s_1,..,s_n)\in S$ is a (strict) Local Nash Equilibrium (LNE) if and only if for each $i\in N$ there exists an open set $W_i\subset S_i$ such that:
\begin{align}\label{eq:lne}
	& \mathbcal{u}_i(s_i^*,s_{-i}^*) > \mathbcal{u}_i(s_i,s_{-i}^*), & \forall s_i\in W_i \backslash \{ s_i^*\}.
\end{align}{}
\end{definition}
\vspace*{4pt}

Whenever the payoff functions $\mathbcal{u}_i$ are sufficiently smooth, it is possible to verify \eqref{eq:lne} using first and second order derivatives.

\begin{lemma}[Criterion for LNE \cite{ratliff2016}]\label{lem:lnecond}
Given a \aem{game} 
$\langle N,\{S_i,i\in N\},\{\mathbcal{u}_i,i\in N\} \rangle$ with \aem{doubly} continuously differentiable payoff functions, a strategy $s^*=(s_1^*,..,s_n^*)\in S$ is a strict LNE whenever 
\begin{equation}\label{eq:lnecond}
\frac{\partial}{\partial s_i}\mathbcal{u}_i(s^*)=0\text{ and }\frac{\partial^2}{\partial s_i^2}\mathbcal{u}_i(s^*)<0,\quad \forall i\in N.
\end{equation}
\end{lemma}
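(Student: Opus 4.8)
The plan is to exploit the fact that condition \eqref{eq:lnecond} is stated player-by-player and involves only the diagonal (own-strategy) derivatives, so that the verification of \eqref{eq:lne} decouples across players. For each fixed $i\in N$ I would freeze the opponents at $s_{-i}^*$ and introduce the restricted payoff $\phi_i(s_i):=\mathbcal{u}_i(s_i,s_{-i}^*)$, which is twice continuously differentiable in $s_i$ by hypothesis. Under this restriction the two assumptions in \eqref{eq:lnecond} read exactly $\phi_i'(s_i^*)=0$ and $\phi_i''(s_i^*)<0$, and the LNE requirement in Definition \ref{def:localnash} reduces to the statement that $s_i^*$ is a strict local maximizer of $\phi_i$.

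First I would establish this local maximality via the classical second-order sufficient condition. Using a second-order Taylor expansion with Lagrange remainder, $\phi_i(s_i)=\phi_i(s_i^*)+\phi_i'(s_i^*)(s_i-s_i^*)+\tfrac12\phi_i''(\xi)(s_i-s_i^*)^2$ for some $\xi$ between $s_i^*$ and $s_i$; the linear term vanishes because $\phi_i'(s_i^*)=0$. Since $\phi_i''$ is continuous and $\phi_i''(s_i^*)<0$, there is an open neighborhood $W_i\subset S_i$ of $s_i^*$ on which $\phi_i''$ remains strictly negative, so the remainder is strictly negative for every $s_i\in W_i\setminus\{s_i^*\}$. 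This yields $\mathbcal{u}_i(s_i,s_{-i}^*)<\mathbcal{u}_i(s_i^*,s_{-i}^*)$ on $W_i\setminus\{s_i^*\}$, which is precisely inequality \eqref{eq:lne}. Because the open set $W_i$ is produced independently for each $i\in N$, all $n$ local-optimality conditions hold simultaneously, and hence $s^*$ is a strict LNE.

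I expect the only delicate point to be the passage from the pointwise negativity $\phi_i''(s_i^*)<0$ to its strict negativity on an entire neighborhood $W_i$; this relies squarely on the continuity of the second derivative, and is the step I would spell out explicitly rather than assert. Should the strategy sets $S_i$ be multidimensional, the same scheme applies with the gradient and Hessian of $\phi_i$ in place of $\phi_i'$ and $\phi_i''$, the curvature condition becoming negative definiteness of the Hessian; but the scalar-conductance setting relevant to this paper makes the single-variable argument above entirely sufficient.
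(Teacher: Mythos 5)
Your proof is correct: freezing the opponents at $s_{-i}^*$, applying the Taylor--Lagrange expansion to $\phi_i(s_i)=\mathbcal{u}_i(s_i,s_{-i}^*)$, and using continuity of $\phi_i''$ to get strict negativity on a whole (interval) neighborhood $W_i$ is exactly the standard second-order sufficiency argument, and the player-by-player decoupling you note is what makes the diagonal conditions \eqref{eq:lnecond} enough. Note that the paper itself offers no proof to compare against --- the lemma is imported verbatim from the cited reference \cite{ratliff2016} --- and your argument is essentially the one given there, so there is nothing missing; the only cosmetic refinement would be to take $W_i$ convex (an open interval around $s_i^*$, or a relative neighborhood when $s_i^*$ sits on the boundary of $S_i=\mathbb{R}_{\geq 0}$) so that the intermediate point $\xi$ is guaranteed to lie in $W_i$.
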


\section{Game Theoretical Interpretation of Voltage Collapse}\label{sec:vc-as-game}

In this section we build a game theoretical framework that provides a deeper insight on the voltage collapse phenomenon and further suggests the necessity of coordination among resources in order to prevent voltage collapse without incurring unnecessary inefficiencies.

\subsection{A Game for Inflexible Constant Power Loads}

In our formulation, the set of players is the set of loads, both conveniently  denoted by $N$. We consider here the case of inflexible loads, that is, $I=N$ and $F=\emptyset$.
For each player $i\in N$ the strategy $s_i$ is given by its conductance $g_i\geq0$. Therefore, the strategy set $S:=\mathbb{R}^n_{\geq0}$.

Following Definition \ref{def:noncoopgame}, it remains to define the utility function of each agent $i\in N$.
The following proposition motivates a particular choice of payoff function.
\begin{proposition}\label{prop:selfutil}
{Consider the game $\langle N,S,\mathbcal{u} \rangle$, where $S=\mathbb{R}_{\geq0}^n$ and for each load $i\in N$ the utility function is given by}
\begin{equation}\label{eq:selfutil}
\begin{split}
\mathbcal{u}_i(g_i;g_{-i}) =& P_{0,i}g_i + (Eg_l)^2\mathrm{ln}\left(\frac{g_{-i}+g_l}{g_i+g_{-i}+g_l}\right) \\
&- (Eg_l)^2\left(\frac{g_{-i}+g_l}{g_i+g_{-i}+g_l} -  1 \right),
\end{split}
\end{equation}
{where $g_{-i}$ denotes $g_{N \backslash \{i\}}=\sum_{j \neq i}g_j$. 
Then, the inflexible load dynamics \eqref{eq:infl} amounts to the myopic gradient dynamics 
\begin{equation}\label{eq:gradient dynamics}
    \dot g_i = \frac{\partial}{\partial s_i}\mathbcal{u}_i(s), \quad i\in I.
\end{equation}
As a consequence, if $g^*\in\mathbb{R}^n_{\geq0}$ is a LNE of $\langle N,S,\mathbcal{u} \rangle$, then it is an equilibrium of \eqref{eq:infl}.}
\end{proposition}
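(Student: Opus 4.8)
The plan is to verify the two claims in sequence: first establish the gradient identity \eqref{eq:gradient dynamics} by a direct differentiation of \eqref{eq:selfutil}, and then deduce the equilibrium claim from the first-order necessary condition for a local maximum. The whole argument is computational plus one standard optimality observation.

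First I would differentiate the three summands of \eqref{eq:selfutil} with respect to $g_i$, treating $g_{-i}$ (equivalently $g_{-i}+g_l$) as a constant. The linear term $P_{0,i}g_i$ contributes $P_{0,i}$; the logarithmic term contributes $-(Eg_l)^2/(g_i+g_{-i}+g_l)$; and the last term contributes $(Eg_l)^2(g_{-i}+g_l)/(g_i+g_{-i}+g_l)^2$. Placing the latter two over the common denominator $(g_{\text{eq}}+g_l)^2$ and substituting $g_{\text{eq}}=g_i+g_{-i}$, the numerator collapses to $-(Eg_l)^2 g_i$, so that $\frac{\partial}{\partial g_i}\mathbcal{u}_i(g)=P_{0,i}-(Eg_l)^2 g_i/(g_{\text{eq}}+g_l)^2$. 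Recognizing the subtracted term as $P_i(g)$ from \eqref{eq:Pi}, this equals $P_{0,i}-P_i(g)=-\Delta P_i(g)$, which is precisely the right-hand side of \eqref{eq:infl}. Hence $\dot g_i=-\Delta P_i(g)=\frac{\partial}{\partial g_i}\mathbcal{u}_i(g)$, which is \eqref{eq:gradient dynamics}.

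For the consequence, suppose $g^*$ is an LNE. By Definition \ref{def:localnash}, for each $i$ the map $g_i\mapsto \mathbcal{u}_i(g_i;g_{-i}^*)$ attains a strict local maximum at $g_i^*$ over an open set $W_i\subset S_i$; since this map is differentiable (the arguments of the logarithm and the rational terms are strictly positive for $g_l>0$), the first-order necessary condition forces $\frac{\partial}{\partial g_i}\mathbcal{u}_i(g^*)=0$ at every interior maximizer. Combined with the gradient identity this gives $\dot g_i=0$ for all $i$, so $g^*$ is an equilibrium of \eqref{eq:infl}. One should also dispatch the boundary of $S_i=\mathbb{R}_{\ge 0}$: evaluating the derivative above at $g_i=0$ yields $P_{0,i}>0$, so $\mathbcal{u}_i$ is strictly increasing there and $g_i^*=0$ can never be a local maximizer; every LNE therefore lies in the interior and the stationarity condition applies without qualification.

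The main obstacle is the algebraic simplification in the second paragraph, namely showing that the logarithmic and rational derivative terms recombine into exactly $-P_i(g)$; the cancellation $-B+A=-g_i$ is the crux and everything downstream is routine. The boundary observation, though elementary, is worth stating explicitly, since otherwise the passage from ``strict local maximizer'' to ``$\frac{\partial}{\partial g_i}\mathbcal{u}_i=0$'' would leave the usual gap associated with constrained strategy sets.
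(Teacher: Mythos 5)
Your proof is correct, and it runs the paper's computation in the opposite direction. The paper starts from the dynamics \eqref{eq:infl}, posits $\frac{\partial}{\partial g_i}\mathbcal{u}_i = -\Delta P_i(g)$, and \emph{integrates} this expression in $g_i$ (after splitting the integrand as $\frac{s+g_{-i}+g_l}{(s+g_{-i}+g_l)^2}$ minus $\frac{g_{-i}+g_l}{(s+g_{-i}+g_l)^2}$) to reverse-engineer the closed form \eqref{eq:selfutil}. You instead take \eqref{eq:selfutil} as given and \emph{differentiate}, collapsing the numerator to $-(Eg_l)^2 g_i$ and recognizing $P_i(g)$; this verifies the same identity with less bookkeeping, since differentiation avoids the decomposition trick and the evaluation of limits of integration. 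The two computations are inverses of one another and equally valid. Beyond that, your proof is actually more complete than the paper's on the second claim: the paper stops once \eqref{eq:selfutil} is recovered and treats the implication ``LNE $\Rightarrow$ equilibrium of \eqref{eq:infl}'' as immediate, whereas you spell out the first-order necessary condition for a strict local maximizer and, importantly, rule out boundary maximizers by noting that $\frac{\partial}{\partial g_i}\mathbcal{u}_i\big|_{g_i=0} = P_{0,i} > 0$, so stationarity applies without qualification on the constrained strategy set $S_i = \mathbb{R}_{\geq 0}$. That boundary observation (which tacitly uses $P_{0,i}>0$, a natural standing assumption on nominal demand) closes a gap the paper leaves implicit.
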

\begin{proof} 
From equation \eqref{eq:infl}, it follows that if $\mathbcal{u}$ is the payoff function of the game $\langle N,S,\mathbcal{u} \rangle$, then
\begin{align*}\label{eq:pflow-integral}
\frac{\partial \mathbcal{u}_i(g_i;g_{-i})}{\partial g_i} &= -  \left(\left(\frac{Eg_l}{g_{\text{eq}}(g) +g_l}\right)^2g_i-P_{0,i}\right)
\end{align*}

Integrating above expression with respect to $g_i$ gives
\begin{align*}
 & \mathbcal{u}_i= \int_{0}^{g_i} \left(P_{0,i} - \frac{(Eg_l)^2}{(s+g_{-i}+g_l)^2}s\right)ds \\
=& \int_{0}^{g_i} \left( P_{0,i} 
- (Eg_l)^2\frac{s+g_{-i}+g_l}{(s+g_{-i}+g_l)^2} \right) ds \\
&+ \int_{0}^{g_i} (Eg_l)^2\frac{g_{-i}+g_l}{(s+g_{-i}+g_l)^2}ds \\ 
=& \left[P_{0,i}s \!-\! (Eg_l)^2 \left(\ln(s\!+\!g_{-i}\!+\!g_l) 
\!+\!\frac{g_{-i}\!+\!g_l}{s\!+\!g_{-i}\!+\!g_l} \right) \right]_0^{g_i} %
\end{align*}
We retrieve (\ref{eq:selfutil}) by substituting for the limit values.
\end{proof}

\aem{
Proposition \ref{prop:selfutil} reverse engineers the utility function such that any LNE is an equilibrium of \eqref{eq:infl}. Thus, although this suggests that some of the power flow solutions --which are represented by the equilibria of \eqref{eq:infl}-- may constitute a LNE, the following theorem unveils a rather surprising fact.
}
\begin{theorem}
[Voltage Collapse is the Unique NE]\label{thm:util-inf}
Given the induced game $\langle N,\mathbb{R}_{\geq0}^n,\mathbcal{u}\rangle $ with utility given by \eqref{eq:selfutil}, the strategy $g_i\rightarrow +\infty$ $\forall i\in N$ is the unique Nash Equilibrium. 
\end{theorem}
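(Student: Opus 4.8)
The plan is to characterize each player's best response and to show that, whatever the opponents do, player $i$ always strictly benefits from increasing $g_i$; hence the best response is to push $g_i$ to infinity, and the only self-consistent profile is one in which every conductance diverges. First I would record the marginal utility, which by Proposition~\ref{prop:selfutil} together with \eqref{eq:Pi} is
\[
\frac{\partial \mathbcal{u}_i}{\partial g_i}(g_i;g_{-i}) = P_{0,i}-P_i(g) = P_{0,i}-\frac{(Eg_l)^2 g_i}{(g_i+g_{-i}+g_l)^2}.
\]
At $g_i=0$ this equals $P_{0,i}>0$ (since $P_i(0)=0$), and differentiating once more via \eqref{eq:dPidgi} gives $\partial^2_{g_i}\mathbcal{u}_i = -(Eg_l)^2(g_l+g_{-i}-g_i)/(g_i+g_{-i}+g_l)^3$, so $\mathbcal{u}_i(\cdot\,;g_{-i})$ is strictly concave on $[0,g_l+g_{-i})$ and strictly convex beyond. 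This fixes the shape of $g_i\mapsto\mathbcal{u}_i$ and shows it admits at most one interior local maximum.

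The crux is the asymptotics as $g_i\to+\infty$. For any fixed finite $g_{-i}$ the power $P_i(g)=(Eg_l)^2 g_i/(g_i+g_{-i}+g_l)^2\to 0$, which is the physical statement that a load driving its own conductance to infinity collapses the voltage and ultimately draws no power. Hence the marginal utility tends to $P_{0,i}>0$, so in \eqref{eq:selfutil} the linear term $P_{0,i}g_i$ dominates the logarithmic term and $\mathbcal{u}_i(g_i;g_{-i})\to+\infty$. Thus the supremum of $\mathbcal{u}_i$ over $g_i\geq 0$ is $+\infty$ and is not attained at any finite $g_i$; the finite critical points singled out by the first-order condition $P_i=P_{0,i}$ yield at best a local (not global) maximum, precisely the stable power-flow solution flagged before the theorem as an LNE. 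Consequently no finite profile $g^*$ can satisfy the strict NE inequality \eqref{eq:ne}: each player, facing the finite aggregate $g_{-i}^*$, strictly improves by raising $g_i$.

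It remains to rule out mixed profiles and to certify the all-infinity profile. If some player $j$ plays a finite $g_j$ while at least one opponent is at infinity, then $g_{-j}=+\infty$, the ratio $(g_{-j}+g_l)/(g_j+g_{-j}+g_l)\to 1$ in \eqref{eq:selfutil}, and the utility reduces to $P_{0,j}g_j$ up to vanishing terms, still strictly increasing in $g_j$; so a finite $g_j$ is again not a best response. Therefore every NE must place all players at infinity, and this profile is self-consistent, since when all opponents diverge each player's best response is still to diverge, making it a fixed point of the best-response map; by the previous paragraph it is the only one, which gives uniqueness. The main obstacle is conceptual rather than computational: infinity is not an element of the strategy set $\mathbb{R}^n_{\geq 0}$, so the ``equilibrium at infinity'' must be read as a limit, and the argument has to be phrased through the eventually strictly increasing behavior of $\mathbcal{u}_i$ and the fixed-point property of the best-response map, rather than through the first-order/second-order criterion \eqref{eq:lnecond}, which would only ever locate the finite local maxima.
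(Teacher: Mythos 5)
Your proposal is correct and takes essentially the same route as the paper: both arguments show that, whether the opponents' conductances are finite or diverging, player $i$'s utility grows without bound as $g_i \to +\infty$ (the linear term $P_{0,i}g_i$ dominates the logarithmic one), so no finite $g_i$ is ever a best response and the all-infinity profile is the unique equilibrium. Your concavity/shape analysis and the explicit caveat that the equilibrium at infinity must be read as a limit are presentational refinements, not a different argument --- the paper packages the same two limit computations as establishing that diverging is a strictly dominant strategy.
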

\begin{proof} 
\aem{We first show that each player maximizes their utility by setting $g_i\rightarrow+\infty$.} Using \eqref{eq:selfutil},
\begin{align*}
&\lim_{g_i\rightarrow+\infty} \mathbcal{u}_i(g_i;g_{-i}) = \lim_{g_i\rightarrow+\infty}  (Eg_l)^2 \ln\left( \frac{g_{-i}+g_l}{g_i +g_{-i}+g_l}\right) \\
&\quad - \lim_{g_i\rightarrow+\infty} (Eg_l)^2\left(\frac{g_{-i}+g_l}{g_i+g_{-i}+g_l}-1\right)  +\lim_{g_i\rightarrow+\infty}P_{0,i}g_i \\
&=\lim_{g_i\rightarrow+\infty} P_{0,i}g_i
\!-\! (Eg_l)^2 \left(\ln \left(\frac{g_{-i}\!+\!g_l}{g_i\!+\! g_{-i}\!+\!g_l}\right)\!-\!1 \right)\!=\! +\!\infty
\end{align*}

The previous derivation assumes that all other agents decide finite conductances. In the case where any other agent $j \neq i$ is also choosing $g_j \rightarrow \infty$, then a similar computation using \eqref{eq:selfutil} gives
\begin{equation*}
\lim_{g_j \rightarrow \infty} \mathbcal{u}_i(g_i;g_{-i}) = P_{0,i}g_i + (E g_l)^2,
\end{equation*}
which implies that 
\begin{equation*}
\lim_{g_i \rightarrow \infty}\lim_{g_j \rightarrow \infty} \mathbcal{u}_i(g_i;g_{-i}) = \lim_{g_i \rightarrow \infty}P_{0,i}g_i + (E g_l)^2 =+\infty
\end{equation*}
Therefore, choosing $g_i \rightarrow +\infty$ is a strictly dominant strategy for agent $i$, i.e. it is the best possible strategy regardless of the strategy chosen by all other agents. \end{proof}

{
Theorem \ref{thm:util-inf} unveils an unusual phenomenon. The normal (and desired) operating point of the DC network in Figure \ref{fig:nbs}, that represents a stable power flow solution of \eqref{eq:infl}, is not a Nash Equilibrium of the game. We will show next, that these stable points are in fact LNE that, because of the myopic nature of \eqref{eq:gradient dynamics}, prevent players from moving towards Voltage Collapse.
}

\subsection{Stable Equilibria are LNE}

We focus now in the  case where $P_{0,\text{tot}}<P_{\text{max}}$, which ensures the existence of power flow solutions, i.e., equilibria in \eqref{eq:infl}, with $I=N$.

\begin{definition}[S-LNE]\label{def:stable-lne}
A point $g^*$ is a Stable Local Nash Equilibrium (S-LNE) if it is a Local Nash Equilibrium of the induced game $\langle N,S,\mathbcal{u} \rangle$ \textit{and} a Stable Equilibrium of (\ref{eq:infl}).{}
\end{definition}

We start by first characterizing the region of stable equilibria. 
For this reason, we consider the set
\[
M :=\{g\in\R^n: \sum_{i\in N} g_i < g_l\}.
\]

\begin{lemma}[Characterization of Stable Region]\label{lem:char-stab-reg}
A hyperbolic equilibrium\footnote{An equilibrium is hyperbolic if its Jacobian is nonsingular.} point $g^*$ of \eqref{eq:infl} is stable if and only if $g^*\in M$.
\end{lemma}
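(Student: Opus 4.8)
The plan is to reduce the claim to a linear-stability computation and then exploit a rank-one structure of the Jacobian that makes the spectrum explicit. Since $g^*$ is a hyperbolic equilibrium of the vector field $\dot g = f(g)$ with $f_i(g) = P_{0,i} - P_i(g)$ (cf. \eqref{eq:infl}), its stability is equivalent to the Jacobian $J := Df(g^*)$ being Hurwitz. So the first step is to compute $J$, whose entries are $J_{ij} = -\,\partial P_i/\partial g_j$. The diagonal entries are handed to us directly by \eqref{eq:dPidgi}, and a one-line calculation, differentiating \eqref{eq:Pi} through the factor $(g_{\text{eq}}+g_l)^{-2}$, gives the off-diagonal entries $\partial P_i/\partial g_j = -2(Eg_l)^2 g_i/(g_{\text{eq}}+g_l)^3$ for $j \neq i$.

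Writing $c := (Eg_l)^2/(g_{\text{eq}}(g^*)+g_l)^3 > 0$, the key observation is that these entries assemble into
\[
J = 2c\, g^* \mathbf 1^\top - c\,(g_{\text{eq}}(g^*)+g_l)\, I,
\]
a rank-one matrix plus a scalar multiple of the identity. This structure renders the spectrum transparent. Any vector $w$ with $\mathbf 1^\top w = 0$ lies in the kernel of $g^* \mathbf 1^\top$, hence is an eigenvector with eigenvalue $-c(g_{\text{eq}}(g^*)+g_l)$, accounting for an $(n-1)$-dimensional eigenspace; and $g^*$ itself satisfies $J g^* = c(g_{\text{eq}}(g^*)-g_l)\,g^*$, contributing the remaining eigenvalue $c(g_{\text{eq}}(g^*)-g_l)$. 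Here I use that $g_{\text{eq}}(g^*) = \mathbf 1^\top g^* > 0$ at any equilibrium carrying nonzero power, so that $g^*$ is transversal to $\{w : \mathbf 1^\top w = 0\}$ and completes the eigenbasis.

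With the full spectrum in hand the conclusion is immediate. The $(n-1)$-fold eigenvalue $-c(g_{\text{eq}}(g^*)+g_l)$ is strictly negative unconditionally, so stability is governed solely by the sign of the simple eigenvalue $c(g_{\text{eq}}(g^*)-g_l)$. Hyperbolicity excludes the degenerate case $g_{\text{eq}}(g^*) = g_l$ (which would make this eigenvalue vanish and $J$ singular), so exactly one of $g_{\text{eq}}(g^*) < g_l$ or $g_{\text{eq}}(g^*) > g_l$ holds: in the former $J$ is Hurwitz and $g^*$ is stable, in the latter $J$ has a positive eigenvalue and $g^*$ is unstable. Since $g_{\text{eq}}(g^*) < g_l$ is precisely the defining condition of $M$, this establishes the equivalence.

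I expect the only genuine obstacle to be recognizing the rank-one-plus-diagonal form of $J$; without it one confronts a full $n\times n$ characteristic polynomial, whereas the decomposition $J = 2c\,g^*\mathbf 1^\top - c(g_{\text{eq}}+g_l)I$ yields every eigenvalue by inspection. A secondary point to verify carefully is that $g_{\text{eq}}(g^*) \neq 0$, so that $g^*$ is a legitimate eigenvector not contained in the hyperplane $\{\mathbf 1^\top w = 0\}$; this holds automatically at any equilibrium of \eqref{eq:infl} with positive demand.
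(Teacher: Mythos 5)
Your proof is correct and follows essentially the same route as the paper: you write the Jacobian as the same rank-one-plus-scaled-identity decomposition (your $2c\,g^*\mathbf{1}^\top - c(g_{\text{eq}}+g_l)I$ is exactly the paper's $\tfrac{2v^2}{g_{\text{eq}}+g_l}g^*\mathds{1}_n^\top - v^2\mathbb{I}_n$), read off the $n-1$ negative eigenvalues and the single decisive eigenvalue $c(g_{\text{eq}}(g^*)-g_l)$, and conclude via hyperbolicity that stability is equivalent to $g_{\text{eq}}(g^*)<g_l$. The only cosmetic difference is that you exhibit the eigenvectors explicitly (the hyperplane $\mathbf{1}^\perp$ and $g^*$ itself) where the paper invokes the spectrum of a rank-one matrix and shifts it by $-v^2$.
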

\begin{proof}

Let $g^*$ {be} an equilibrium of \eqref{eq:infl}, i.e., $\Delta P_i(g_i^*)=0$ for all $i \in N$. The Jacobian of the system is given by
\begin{align}\label{eq:dec-jacob}
J(g^*) =	&  \frac{2v^2(g^*)}{g_{\text{eq}}(g^*)+g_l}g^*\mathds{1}_n^T-v^2(g^*)\mathbb{I}_n  \end{align}
where $\mathbb{I}_n$ is the identity matrix.

Let $K_{1}(g^*) = \frac{2v^2(g^*)}{g_{\text{eq}}(g^*)+g_l}g^*\mathds{1}_n^T$.
Then, since $K_1(g^*)$ is a rank 1 matrix, it has $n-1$ eigenvalues $\lambda_i(K_1)=0$, $i\in\{1,\dots,n-1\}$, and one non-zero eigenvalue $$\lambda_n(K_1)=\frac{2v^2(g^*)}{g_{\text{eq}}(g^*)+g_l}\mathds{1}_n^T g^*=\frac{2v^2(g^*)}{g_{\text{eq}}(g^*)+g_l} g_{\text{eq}}(g^*).$$

Therefore, since the second term in \eqref{eq:dec-jacob} is an identity matrix, the eigenvalues of $J(g^*)$ are a shifted from the eigenvalues of $K_1(g^*)$ by $-v^2(g^*)$, i.e. $\lambda_i(J(g^*))=\lambda_i(K_1(g^*))-v^2(g^*)$, which gives
\begin{align*}
\lambda_i(J)=
\begin{cases}
-v^2(g^*), &  i\in\{1,..,n-1\}; \\
\frac{2v^2(g^*)}{g_{\text{eq}}(g^*)+g_l}g_{\text{eq}}(g^*)-v^2(g^*),&i=n.
\end{cases} 
\end{align*}

We can now prove the statement of the lemma.

($\Rightarrow$) If $g^*$ is an asymptotically stable hyperbolic equilibrium, then $J(g^*)$ is Hurwitz and thus:
$\lambda_n(J)<0 \Rightarrow v^2(g^*)(\frac{2g_{\text{eq}}(g^*)}{g_{\text{eq}}(g^*)+g_l}-1)<0 \Rightarrow g_{\text{eq}}(g^*)<g_l$. \\

($\Leftarrow$) If $g^* \in M$, then: $\lambda_n(J)<0$. Since all eigenvalues of $J(g^*)$ are negative, by \cite[Theorem 3.5]{khalil2002nonlinear} $J(g^*)$ is stable. \end{proof}

We are now ready to show the main result of this section.
\begin{theorem}[Characterization of Stable Equilibria]\label{lem:char-stable-lne}
Every stable equilibria of (\ref{eq:infl}) is a LNE of $ \langle N,S,\mathbcal{u} \rangle$.
\end{theorem}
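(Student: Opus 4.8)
The plan is to invoke the LNE criterion of Lemma~\ref{lem:lnecond}, which reduces the claim to verifying, at any stable equilibrium $g^*$, the two decoupled per-player conditions $\frac{\partial}{\partial g_i}\mathbcal{u}_i(g^*)=0$ and $\frac{\partial^2}{\partial g_i^2}\mathbcal{u}_i(g^*)<0$ for every $i\in N$. The first-order condition comes essentially for free: by Proposition~\ref{prop:selfutil} the inflexible load dynamics \eqref{eq:infl} are exactly the myopic gradient dynamics $\dot g_i=\frac{\partial}{\partial g_i}\mathbcal{u}_i(g)$, so at an equilibrium $g^*$ (where $\dot g_i=0$) the partial derivative $\frac{\partial}{\partial g_i}\mathbcal{u}_i(g^*)$ vanishes for all $i$. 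Hence only the second-order (concavity) condition requires real work.

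Next I would differentiate the marginal utility once more. Starting from $\frac{\partial}{\partial g_i}\mathbcal{u}_i=P_{0,i}-\frac{(Eg_l)^2 g_i}{(g_{\text{eq}}(g)+g_l)^2}$, which is computed in the proof of Proposition~\ref{prop:selfutil}, and using $\partial g_{\text{eq}}/\partial g_i=1$, a short quotient-rule computation gives
\begin{equation*}
\frac{\partial^2}{\partial g_i^2}\mathbcal{u}_i(g)=-\frac{(Eg_l)^2}{(g_{\text{eq}}(g)+g_l)^3}\bigl(g_l+g_{-i}-g_i\bigr),
\end{equation*}
which is simply $-\frac{\partial}{\partial g_i}P_i(g)$ from \eqref{eq:dPidgi}, after noting $g_l+g_{\text{eq}}(g)-2g_i=g_l+g_{-i}-g_i$. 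Since the prefactor is strictly positive, the concavity condition $\frac{\partial^2}{\partial g_i^2}\mathbcal{u}_i(g^*)<0$ holds if and only if $g_i^*<g_l+g_{-i}^*$ for every $i\in N$.

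The final step is to produce this inequality from stability. By Lemma~\ref{lem:char-stab-reg}, a stable (hyperbolic) equilibrium satisfies $g^*\in M$, i.e. $g_{\text{eq}}(g^*)=\sum_{j\in N}g_j^*<g_l$. Since conductances are nonnegative, this yields the chain $g_i^*\le g_{\text{eq}}(g^*)<g_l\le g_l+g_{-i}^*$, so $g_i^*<g_l+g_{-i}^*$ for each $i$ and the second-order condition is met; both hypotheses of Lemma~\ref{lem:lnecond} then hold at $g^*$, proving it is a strict LNE. I expect the only delicate point to be the interface between the two notions of stability: Lemma~\ref{lem:char-stab-reg} is phrased for hyperbolic equilibria via the Jacobian \eqref{eq:dec-jacob}, and I would emphasize that its conclusion $g_{\text{eq}}(g^*)<g_l$ is \emph{precisely} the aggregate bound that the coordinatewise second-order test needs, so no further spectral analysis is required. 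The boundary case $g_{\text{eq}}(g^*)=g_l$ is non-hyperbolic and would only yield a degenerate (non-strict) second derivative, consistent with its exclusion from the strict stability region $M$.
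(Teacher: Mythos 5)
Your proposal is correct and follows essentially the same route as the paper's proof: invoke Lemma~\ref{lem:char-stab-reg} to get $g^*\in M$, identify $\frac{\partial^2}{\partial g_i^2}\mathbcal{u}_i(g^*)=-\frac{\partial}{\partial g_i}P_i(g^*)$, show it is negative because $g_{\text{eq}}(g^*)<g_l$ and conductances are nonnegative, and conclude via Lemma~\ref{lem:lnecond}. The only differences are cosmetic: you make explicit the first-order condition (which the paper leaves implicit in the gradient-dynamics structure) and the hyperbolicity caveat, both of which are welcome clarifications rather than a different argument.
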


\begin{proof}
Let $g^*$ be a stable equilibrium of (\ref{eq:infl}). Then by Lemma \ref{lem:char-stab-reg}, $g^*\in M$. 
\begin{align*}
\frac{\partial^2 \mathbcal{u}_i(g_i^*)}{{\partial g_i}^2} = -\frac{\partial P_i(g_i^*)}{\partial g_i} = -v^2(g^*)\left(1-\frac{2g_i^*}{g_{\text{eq}}(g^*)+g_l}\right)
\end{align*}
Therefore, since  $2g_i^*\leq 2g_{\text{eq}}(g^*)<g_{\text{eq}}(g^*)+g_l$, it follows that $1-\frac{2g_i^*}{g_{\text{eq}}(g^*)+g_l}>0$ and $\frac{\partial P_i(g_i^*)}{\partial g_i^*}>0$. Then, by Lemma \ref{lem:lnecond}, $g^*$ is a LNE.
\end{proof}

\begin{remark}
Notice that not all LNE are stable equilibria. A LNE is a point for which the diagonal elements of the Jacobian are negative, which does not guarantee that $J(g^*)$ is Hurwitz. A counter-example is the case of two loads ($N=\{1,2\}$) whose demand $P_{0,1}$, $P_{0,2}$ is such that: $0<g_1<g_l$ and $g_l<g_2<g_l+g_1$. The vector $g=(g_1,g_2)\notin M$, but:
\begin{align*}
\frac{\partial \mathbcal{u}_i^2}{\partial g_i^2}=-\frac{(Eg_l)^2}{(g_{\text{eq}}(g)+g_l)^3}(g_l+g_{\text{eq}}(g)-2g_i)<0 & ,\forall i\in N.
\end{align*}
This point is indeed a LNE, but not a stable equilibrium of (\ref{eq:infl}) since $g_1+g_2>g_l$. 
\end{remark}

\begin{figure}[htp]
    \centering
		\includegraphics[width=0.225\textwidth]{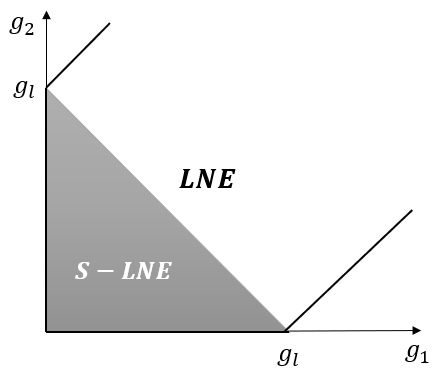}
	\caption{Region of Stable LNE for the case of two loads}
	\label{fig:regSLNE}
\end{figure}

\subsection{Voltage Collapse with Inflexible Loads}\label{ssec:voltage collapse}

We now show how in the overload regime ($P_{0,\text{tot}}>P_{\text{max}}$), the players indeed drive the system to the unique NE. 

\begin{theorem}[Voltage Collapse with Inflexible Loads]\label{prop:vc}
The dynamic load model \eqref{eq:infl} with $I=N$ undergoes a voltage collapse whenever $\varepsilon:=P_{0,\text{tot}}-P_{\text{max}}>0$.
\end{theorem}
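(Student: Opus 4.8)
The plan is to collapse the $n$-dimensional dynamics \eqref{eq:infl} into a scalar differential inequality for the aggregate conductance $g_{\text{eq}}(g)=\sum_{i\in N}g_i$, and then to exploit the global capacity bound $P_{\text{tot}}(g)\le P_{\max}$ obtained around \eqref{eq:Pmax}. Concretely, summing \eqref{eq:infl} over all $i\in N$ and using \eqref{eq:Ptot} gives
\begin{equation*}
\dot{g}_{\text{eq}}=\sum_{i\in N}\dot{g}_i=-\sum_{i\in N}\Delta P_i(g)=P_{0,\text{tot}}-P_{\text{tot}}(g).
\end{equation*}
Since $P_{\text{tot}}(g)$ depends on $g$ only through $g_{\text{eq}}$, the evolution of the aggregate is governed by a closed one-dimensional flow, and this is the key structural observation that makes the whole argument short.

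First I would establish forward invariance of the physical domain $\mathbb{R}^n_{\geq0}$: on the face $g_i=0$ the vector field of \eqref{eq:infl} satisfies $\dot{g}_i=P_{0,i}\ge0$, so no trajectory starting with $g_i(0)\ge0$ can leave the nonnegative orthant. This guarantees $g_{\text{eq}}(t)\ge0$ along the trajectory, so the bound $0\le P_{\text{tot}}(g)\le P_{\max}$ is valid; moreover, since $\dot{g}_{\text{eq}}=P_{0,\text{tot}}-P_{\text{tot}}(g)\le P_{0,\text{tot}}$ and each $g_i\le g_{\text{eq}}$ on the orthant, the growth of $g_{\text{eq}}$ (and hence of every coordinate) is at most linear, ruling out finite-time escape and yielding a trajectory defined for all $t\ge0$.

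Then, applying $P_{\text{tot}}(g)\le P_{\max}$ on this invariant domain, the scalar flow obeys the differential inequality
\begin{equation*}
\dot{g}_{\text{eq}}=P_{0,\text{tot}}-P_{\text{tot}}(g)\ge P_{0,\text{tot}}-P_{\max}=\varepsilon>0,
\end{equation*}
so that $g_{\text{eq}}(t)\ge g_{\text{eq}}(0)+\varepsilon t\to+\infty$. Substituting into \eqref{eq:vg} gives $v(g(t))=Eg_l/(g_{\text{eq}}(t)+g_l)\to0$, which is exactly the voltage collapse condition $v(g(t))\to0$ as $t\to+\infty$.

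I do not expect a genuine obstacle here: the single conceptual step is recognizing that the aggregate obeys a closed scalar flow whose drift is uniformly bounded below by $\varepsilon$ precisely because of the capacity bound $P_{\text{tot}}\le P_{\max}$. The only technical care required is the invariance/well-posedness point above, which is what ensures the capacity bound holds along the whole trajectory and that the divergence of $g_{\text{eq}}$ occurs as $t\to+\infty$ rather than in finite time.
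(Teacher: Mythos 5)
Your proposal is correct and follows essentially the same route as the paper: the paper also sums the dynamics via $V(g)=\sum_{i\in N}g_i$, establishes invariance of $\mathbb{R}^n_{\geq0}$ and global existence of solutions, bounds $\dot V \geq P_{0,\text{tot}}-P_{\text{max}}=\varepsilon>0$ using the capacity bound, and concludes $g_{\text{eq}}(t)\to\infty$ hence $v(g(t))\to0$ from \eqref{eq:vg}. The only cosmetic difference is that the paper invokes global Lipschitz continuity for well-posedness and phrases the divergence as finite-time escape from every sublevel set, whereas you rule out finite-time blow-up via the linear growth bound and integrate the differential inequality directly.
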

\begin{proof}
Notice first that $\mathbb{R}^n_{\geq0}$ is invariant, since whenever $g_i=0$ \eqref{eq:infl} implies that $\dot g_i>0$. Also, it is easy to check that \eqref{eq:infl} is globally Lipschitz on $\mathbb{R}^n_{\geq0}$ since $g_l>0$. Thus by \cite[Theorem 3.2]{khalil2002nonlinear}, there is a unique solution to \eqref{eq:infl}, $g(t)$, that is defined $\forall t\geq0$. Now, consider the function $V(g) = \sum_{i\in N}g_i$, and let $S^+_V(a)=\{g\in\mathbb{R}^n_{\geq0}: V(g)\leq a\}$.
By taking the time derivative of $V$ we get
\begin{align*}
\dot V(g) &= \sum_{i=1}^n \dot g_i\! =\! -\!\sum_{i=1}^n P_i(g)-P_{0,i}
\!\geq\! P_{0,\text{tot}}\!-\!P_{\text{max}} \!=\!\varepsilon\!>\!0.
\end{align*}

Therefore, $\forall a\geq 0 $ if $g(0)\in S_V^+(a)$, $g(t)$ escapes $S_V^+(a)$ in finite time and therefore $||g(t)||\rightarrow \infty$ as $t\rightarrow \infty$. It follows then that $g_{\text{eq}}(t)$ grows unboundedly and by \eqref{eq:vg} $v(g(t))\rightarrow 0$, i.e., the system voltage collapses.
\end{proof}

\begin{remark}
Theorem \ref{prop:vc} further suggests that our model for inflexible loads successfully captures the property that excess on power demand beyond the network capacity implies voltage collapse.
\end{remark}
\subsection{Behavioral Interpretation of Voltage Collapse}\label{ssec:behavioral interpretation}

We conclude this section by illustrating how the game theoretical framework developed above allow us to obtain a behavioral interpretation of voltage collapse. 

As Theorem \ref{thm:util-inf} shows, a game representation of \eqref{eq:infl}, for which power flow solutions can provide some notion of (local) optimality (LNE), naturally leads to voltage collapse as a dominant strategy. This suggests that it is the selfish behavior of each player --that seeks to maximize their own payoff-- that constitute the underlying cause of voltage collapse.
Surprisingly, under normal operating conditions ($P_{0,\text{tot}}<P_{\text{max}}$), it is the myopic behavior of each player, who can only assess optimality within a neighborhood, that prevents the players from converging to the dominant strategy (Theorem \ref{lem:char-stable-lne}). As soon as such locally optimal solutions disappear (when $P_{0,\text{tot}}>P_{\text{max}}$), the players converge to voltage collapse (Theorem \ref{prop:vc}). 

In summary, voltage collapse is intrinsically connected to the selfish myopic desire of each load to match its required consumption, even when the network cannot provide such aggregate amount of power. This behavior is reminiscent of the one present in the \emph{tragedy of the commons} \cite{hardin1968}, and further suggests that certain level of coordination may be required in order to prevent voltage collapse. This is the basis of the solution proposed next.

\section{Voltage Collapse Stabilizer Control}\label{sec:vc-stab-control}

We now focus our attention to the task of preventing voltage collapse. Thus, we assume that there exists a subset of the loads $F\subseteq N$, $F\not=\emptyset$, that are receptive to curtailment. However, from an efficiency perspective, such curtailment should only occur whenever the total demand exceeds the network capacity ($P_{\text{0,tot}}>P_{\text{max}}$). Moreover, if curtailment does occur, it should be fairly allocated among the flexible loads.
These design objectives are summarized in the following problem formulation.

\begin{problem}[Voltage Collapse Stabilization]\label{prob:vcs}
Design a control signal $u_i$, $i\in F$, such that:
\begin{itemize}
    \item\textit{Load Satisfaction:} Whenever $P_{0,\text{tot}}<P_{\text{max}}$ the equilibrium $$g^*\in M\cap\{g:\Delta P_i(g^*)=0,\,i\in N \}$$ is the unique asymptotically stable equilibrium within $M$. 

    \item\textit{Efficient Allocation:} Whenever $P_{0,\text{tot}}>P_{\text{max}}$, the only stable equilibrium $g^*$ leads to a curtailment allocation that is the optimal solution to
    \begin{equation}\label{eq:efficiency}
    \begin{aligned}
    & \underset{\Delta P_i,i\in F}{\text{minimize}}
    & & \sum_{i\in F} \frac{\theta_i}{2}\left( \Delta P_i\right)^2 \\
    & \text{subject to}
    & & \sum_{i\in F}\Delta P_i = P_{\text{max}}-P_{0,\text{tot}}.
    \end{aligned}
    \end{equation}
\end{itemize}{}
\end{problem}

We will call a controller $u$ that solves Problem \ref{prob:vcs} a \emph{Voltage Collapse Stabilizer (VCS) Controller}. 
The rest of this section is devoted to show that the following control law is a VCS Controller: 
\begin{equation}\label{eq:vcsc}
\dot{g} = -A(g)(P(g) - P_0):\mathbb{R}^n\rightarrow\mathbb{R}^n 
\end{equation}
where $A(g)=\text{diag}\{\alpha_i(g),\,i\in N\}$,
\begin{equation}\label{eq:alpha}
\alpha_i(g_i) = 
\begin{cases}
    \frac{\kappa (\bar{g_i} - g_i)}{1+\kappa (\bar{g_i} - g_i)}, & \forall i \in F;    \\
    1, & \forall i \in I;
\end{cases}
\end{equation}
and
\begin{equation}\label{eq:bar-g}
\bar{g_i} = \frac{P_{0,i}}{(E/2)^2}+\frac{P_{\text{max}} - P_{0,\text{tot}}}{\gamma_i(E/2)^2},
\end{equation}
with $\gamma_i = \theta_i\sum_{j\in F} \frac{1}{\theta_j}$ and $\kappa$ is a positive parameter: $0<\kappa<\infty$.

\begin{remark}
The term $\alpha_i(g)$ aims to introduce a new equilibrium point $g^*$ when $P_{0,\text{tot}}>P_{\text{max}}$ such that, whenever $g^*$ satisfies $\alpha_i(g^*)=0$ $\forall i\in F$,  $\{\Delta P_i(g^*),i\in F\}$ is a solution to \eqref{eq:efficiency}. However, as we show in the next section, this can tentatively introduce new equilibria.
\end{remark}

\subsection{Characterization of Equilibria}

We now proceed to characterize the set of equilibria of \eqref{eq:vcsc}. 
Given a set of indices $G$, consider
\begin{equation}
    E_{G}:=\{g:\alpha_i(g)\!=\!0,\,i\!\in\!  G,\,\Delta P_i(g)\!=\!0,\,i\!\in\! G^c \}\label{eq:eqlbr-G}
\end{equation}
It is easy to see that the set $\{\cup_{G\subseteq F}E_G\}$ compactly encapsulates every equilibrium of \eqref{eq:vcsc}.
The following lemma will allow us to further characterize each set \eqref{eq:eqlbr-G}.
\begin{lemma}[Intermediate Value Theorem \cite{rudin1976}]\label{lem:ivt}
Let $f\in C[a,b]$. 
Then for any $\psi\in(f(a),f(b))$ there exists $\xi\in[a,b]$ such that $f(\xi)=\psi$.
\end{lemma}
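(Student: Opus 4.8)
The plan is to invoke the least upper bound property of $\mathbb{R}$, which is the essential ingredient behind the Intermediate Value Theorem. Without loss of generality I would assume $f(a)<\psi<f(b)$; the reversed case $f(b)<\psi<f(a)$ follows by applying the same argument to $-f$, or equivalently by symmetry.

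First I would introduce the candidate set $A:=\{x\in[a,b]: f(x)<\psi\}$. This set is nonempty since $a\in A$ (because $f(a)<\psi$), and it is bounded above by $b$. Hence by completeness of the reals it admits a supremum $\xi:=\sup A\in[a,b]$. The goal is to show $f(\xi)=\psi$, which I would establish by ruling out the two strict inequalities using continuity of $f$ at $\xi$.

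The key step is the two-sided continuity argument. Suppose first that $f(\xi)<\psi$. Then $\xi\neq b$ (since $f(b)>\psi$), and by continuity there is a one-sided neighborhood to the right of $\xi$ on which $f$ remains below $\psi$; this produces points of $A$ strictly larger than $\xi$, contradicting that $\xi$ is an upper bound. Suppose instead that $f(\xi)>\psi$. Then $\xi\neq a$, and continuity again yields a neighborhood of $\xi$ on which $f$ stays above $\psi$, so no point of $(\xi-\delta,\xi]$ lies in $A$; combined with $A\subseteq(-\infty,\xi]$, this makes $\xi-\delta$ an upper bound for $A$ strictly smaller than $\xi$, contradicting that $\xi$ is the least upper bound. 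Since both strict inequalities lead to contradictions, I conclude $f(\xi)=\psi$.

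I expect the main (and only) subtlety to be the careful handling of the endpoints and the epsilon--delta bookkeeping in the continuity step, ensuring the perturbation of $\xi$ stays inside $[a,b]$. An alternative, more conceptual route would be to note that $[a,b]$ is connected, that the continuous image of a connected set is connected, and that connected subsets of $\mathbb{R}$ are precisely the intervals; hence $f([a,b])$ is an interval containing both $f(a)$ and $f(b)$, and therefore contains every $\psi$ between them. Since the result is quoted from \cite{rudin1976}, either argument suffices and no new machinery is required.
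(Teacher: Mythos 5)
Your proof is correct. Note, however, that the paper itself offers no proof of this lemma: it is a textbook fact quoted directly from \cite{rudin1976} and used as a tool (in Lemma~\ref{lem:char-eg} and Theorem~\ref{thm:stab}), so there is no in-paper argument to compare against. Your main argument --- taking $\xi = \sup\{x \in [a,b] : f(x) < \psi\}$ and ruling out $f(\xi) < \psi$ and $f(\xi) > \psi$ by continuity, with correct handling of the endpoints and of which inequality contradicts ``upper bound'' versus ``least upper bound'' --- is the classical completeness-based proof and is sound. The alternative you sketch (continuous images of connected sets are connected, and connected subsets of $\mathbb{R}$ are intervals) is in fact the route Rudin himself takes, so either version would faithfully discharge the citation; the supremum argument is more elementary in that it avoids developing connectedness, while the topological argument generalizes immediately beyond~$\mathbb{R}$.
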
 

 \begin{lemma}[Characterization of $E_G$]\label{lem:char-eg}
Given any set $G \subseteq F$, the set $E_G$ composes of two equilibria 
$g^*_{1},g^*_{2}$ such that $$g_{1,i}^*=g_{2,i}^*=\bar g_i\quad\forall i\in G,$$ and 
\begin{equation}\label{eq:char-g}
    \sum_{i\in G^c}g_{1,i}^* <g_l + \sum_{i\in G} \bar g_i<\sum_{i\in G^c}g_{2,i}^*.
\end{equation}
Moreover, whenever $G=F$, then $g^*_{1}\in\text{cl}(M)$ is such that 
\begin{equation}\label{eq:char-ef}
g_{\text{eq}}(g_1^*)=g_l,\,g_{1,i}^*=\frac{P_{0,i}}{(\frac{E}{2})^2},\,\forall i\in I,    \text{ and }v(g_1^*)=\frac{E}{2}.
\end{equation}

\begin{proof}
If $g^* \in E_G$, then $\alpha_i(g^*)=0$ for all $i\in G$:
\begin{equation}\label{eq:sol-alpha-g}
    \frac{\kappa (\bar{g}_i-g_i^*)}{1+\kappa (\bar{g}_i-g_i^*)}=0\Rightarrow g_i^* = \bar{g}_i \qquad i\in G
\end{equation}
Following (\ref{eq:Ps}), let
$g_{G^c}^*=\sum_{i\in G^c}g_i^*$ and $\bar{g}_{G}=\sum_{i \in G}\bar{g}_i$. We will solve 
\begin{equation}\label{eq:eq-cond-g}
f(g_{G^c}^*)=\frac{(Eg_l)^2}{(g^*_{G^c}+\bar{g}_{G}+g_l)^2}g_{G^c}^*-P_{0,G^c} = 0
\end{equation}
for $g^*_{G^c}$. Following (\ref{eq:Ps,max}), there exists $g^*_{G^c}$ s.t. $f(g^*_{G^c})=0$ only if $P_{0,G^c}(g^*) \leq \frac{E^2g_l}{4}\frac{g_l}{\bar{g}_{G}+g_l}$, namely when the demand does not exceed the capabilities of the line. Moreover, the first and second order derivative of $f(g^*_{G^c})$ and $P_{G^c}(g^*)$ are identical, since $P_{0,{G^c}}$ is a constant. Therefore, from (\ref{eq:Ps,max}), $f(g^*_{G^c})$ is maximized for $g^*_{G^c,\text{max}}=\bar{g}_{G}+g_l$, where 
\begin{equation}\label{eq:fmax}
f_{\text{max}}=f(g^*_{G^c,\text{max}})=\frac{E^2g_l}{4}\frac{g_l}{\bar{g}_{G}+g_l}-P_{0,G^c}.
\end{equation}
In addition, $f\in C^{\infty}[0,\infty)$,
\begin{equation*}
    f(0) = - P_{0,G^c}<0, \text{ and } \lim_{g^*_{G^c}\rightarrow \infty} f(g^*_{G^c})\rightarrow - P_{0,G^c} < 0
\end{equation*}
Hence, if $f_{\max}>0$ (which is true solely when $P_{0,G^c}(g^*) < \frac{E^2g_l}{4}\frac{g_l}{\bar{g}_{G}+g_l}$), then the Intermediate Value Theorem in the two intervals $[0,g^*_{G^c,\text{max}}]$, $[g^*_{G^c,\text{max}},\infty)$ concludes that there will exist $g^*_{G^c_{1}}\in (0,g^*_{G^c,\text{max}})$ and $g^*_{G^c_{2}}\in (g^*_{G^c,\text{max}},\infty)$ that are roots of $f(g^*_{G^c})$. Thus, there exists $g_1^*$ and $g_2^*$ satisfying \eqref{eq:char-g}.

When $G=F$ (hence $G^c=I$), we have
\begin{equation*}
\begin{split}
    g_{\text{eq}}(g^*) \overset{(\ref{eq:sol-alpha-g})}{=}&\sum_{i\in F}\bar{g}_i^*+\sum_{i\in I}g_i^* \overset{(\ref{eq:bar-g})}{\!=\!} \sum_{i\in F}\frac{P_{0,i}}{(\frac{E}{2})^2} \\
   &\!+\! \sum_{i\in F}\frac{1}{\gamma_i}\frac{P_{\text{max}}\!-\!P_{0,\text{tot}}}{(\frac{E}{2})^2} \!+\! \sum_{i\in I}\frac{P_{0,i}}{(\frac{E}{2})^2} 
    \!=\! \frac{P_{\text{max}}}{(\frac{E}{2})^2}\overset{(\ref{eq:Pmax})}{\!=\!} g_l
    \end{split}
\end{equation*}
where in the unmarked equality we used fact that $\sum_{i\in F}\frac{1}{\gamma_i}=1$ by definition. 

Now, since $g_{\text{eq}}(g^*)=g_l$, it follows immediately from (\ref{eq:vg}) that $v(g^*)=\frac{E}{2}$. 
Moreover, since  $g^*$ is indeed an equilibrium we must have $P_i(g^*)=v^2(g^*)g_i^*=\left(\frac{E}{2}\right)^2g_i^*=P_{0,i},\, \forall i\in I,$ which leads to $g_i^*=\frac{P_{0,i}}{(\frac{E}{2})^2}$.
Finally, we can check that $g^*=g_1^*$ since $g^*_{I}<g_{\text{eq}}(g^*)=g_l<\bar{g}_{F}+g_l$. \end{proof}
\end{lemma}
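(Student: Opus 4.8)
The plan is to decouple the two families of conditions defining $E_G$ in \eqref{eq:eqlbr-G}. First I would dispatch the conditions $\alpha_i(g)=0$ for $i\in G$: since $\kappa>0$, the definition \eqref{eq:alpha} forces the numerator $\kappa(\bar g_i-g_i)$ to vanish, so every point of $E_G$ must satisfy $g_i^*=\bar g_i$ for all $i\in G$. This immediately gives the first assertion $g_{1,i}^*=g_{2,i}^*=\bar g_i$ and shows that the whole $G$-block is shared by all equilibria in $E_G$; only the $G^c$-block remains free.

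The key step is to collapse the remaining conditions $\Delta P_i(g)=0$, $i\in G^c$, into a single scalar equation. Because the voltage \eqref{eq:vg} depends on $g$ only through $g_{\text{eq}}(g)=\bar g_G+g_{G^c}$, where $\bar g_G:=\sum_{i\in G}\bar g_i$ is now fixed and $g_{G^c}:=\sum_{i\in G^c}g_i$, each equation $v^2(g)g_i^*=P_{0,i}$ pins down $g_i^*=P_{0,i}/v^2(g^*)$ once the aggregate $g_{G^c}$ is known; summing over $i\in G^c$, the block is self-consistent exactly when $g_{G^c}$ is a root of $f(g_{G^c})=P_{G^c}(g_{G^c};\bar g_G)-P_{0,G^c}$, with $P_{G^c}$ the aggregate power \eqref{eq:Ps}. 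Thus existence and multiplicity of equilibria in $E_G$ reduce to root-counting for the scalar $f$. Reading $G$ in the role of the complement set in the derivation of \eqref{eq:Ps,max} (so that $g_{S^c}=\bar g_G$), the derivative \eqref{eq:dPsdgs} shows $f$ is strictly increasing on $[0,\,g_l+\bar g_G]$ and strictly decreasing afterwards, peaking at $g_{G^c}=g_l+\bar g_G$ with value $\frac{E^2 g_l}{4}\frac{g_l}{\bar g_G+g_l}-P_{0,G^c}$ by \eqref{eq:Ps,max}. Since $f(0)=-P_{0,G^c}<0$ and $f\to-P_{0,G^c}<0$ as $g_{G^c}\to\infty$, whenever this peak value is positive the unimodal shape together with the Intermediate Value Theorem (Lemma \ref{lem:ivt}), applied separately on each side of the peak, yields exactly one root $g_{G^c,1}^*\in(0,g_l+\bar g_G)$ and one root $g_{G^c,2}^*\in(g_l+\bar g_G,\infty)$. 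Recovering the individual $G^c$-components from each root produces the two equilibria $g_1^*,g_2^*$, and the separation $g_{G^c,1}^*<g_l+\bar g_G<g_{G^c,2}^*$ is precisely \eqref{eq:char-g}.

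For the special case $G=F$ (hence $G^c=I$) the plan is to exhibit the smaller root explicitly and then locate it. I would test the candidate with $g_i^*=\bar g_i$ for $i\in F$ and $g_i^*=P_{0,i}/(E/2)^2$ for $i\in I$: substituting into $g_{\text{eq}}$, expanding $\bar g_i$ via \eqref{eq:bar-g}, and using the normalization $\sum_{i\in F}1/\gamma_i=1$ (immediate from $\gamma_i=\theta_i\sum_{j\in F}1/\theta_j$) together with $P_{\max}=(E/2)^2 g_l$ from \eqref{eq:Pmax}, the sum should collapse to $g_{\text{eq}}(g^*)=P_{\max}/(E/2)^2=g_l$. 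Then \eqref{eq:vg} gives $v(g^*)=E/2$, and $P_i(g^*)=(E/2)^2 g_i^*=P_{0,i}$ for $i\in I$ confirms the point lies in $E_F$, establishing \eqref{eq:char-ef}; moreover $g_{\text{eq}}=g_l$ places $g^*$ on the boundary, so $g^*\in\mathrm{cl}(M)$. Finally, since the corresponding aggregate $\sum_{i\in I}g_i^*=g_{\text{eq}}-\bar g_F=g_l-\bar g_F$ lies strictly below the peak location $g_l+\bar g_F$, the point sits in the first interval and is therefore $g_1^*$ rather than $g_2^*$.

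I expect the conceptual crux to be the reduction to the scalar equation $f$: once one recognizes that the voltage, and hence the entire $G^c$-block, is governed by the single aggregate $g_{G^c}$, the monotonicity already established in \eqref{eq:dPsdgs} and \eqref{eq:Ps,max} together with Lemma \ref{lem:ivt} do all of the remaining work. The most error-prone step will be the $G=F$ computation, where the exact cancellation to $g_{\text{eq}}=g_l$ hinges on the precise forms of $\bar g_i$ and $\gamma_i$; I would also be careful to note that the ``two equilibria'' conclusion presumes the positive-peak regime $P_{0,G^c}<\frac{E^2 g_l}{4}\frac{g_l}{\bar g_G+g_l}$, outside of which $E_G$ degenerates to a single point or is empty.
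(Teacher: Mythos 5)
Your proposal is correct and follows essentially the same route as the paper's proof: reduce the $G$-block via $\alpha_i=0\Rightarrow g_i^*=\bar g_i$, collapse the $G^c$-block to the scalar equation $f(g_{G^c})=P_{G^c}(g_{G^c};\bar g_G)-P_{0,G^c}=0$, apply unimodality from \eqref{eq:dPsdgs}--\eqref{eq:Ps,max} plus the Intermediate Value Theorem on both sides of the peak, and verify the explicit point for $G=F$. If anything, your write-up is slightly tighter than the paper's in two places: you make explicit that each root of $f$ recovers a unique equilibrium via $g_i^*=P_{0,i}/v^2(g^*)$ (which is what justifies ``exactly two'' equilibria), and your guess-and-verify treatment of the $G=F$ case avoids the circular-looking order of substitutions in the paper's computation of $g_{\text{eq}}(g^*)=g_l$.
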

 
 We now show that our controller \eqref{eq:vcsc} does in fact guarantee the existence of an equilibrium that solves \eqref{eq:efficiency}.
 \begin{theorem}[Efficient Allocation]\label{th:efficient curtailment}
Consider the system \eqref{eq:vcsc} with equilibria characterized by the set $E_F$ as shown in \eqref{eq:eqlbr-G}. Then, the conductance $g^*\in E_F\cap \text{cl}(M)=\{g^*\}$ leads to an efficient curtailment $\{\Delta P_i(g^*),i\in F\}$ that is  optimal w.r.t. \eqref{eq:efficiency}.
 \end{theorem}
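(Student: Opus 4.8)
The plan is to reduce the theorem to a direct comparison between the curtailment induced at the equilibrium and the KKT point of the strictly convex program \eqref{eq:efficiency}, using Lemma \ref{lem:char-eg} to supply all the structural facts about $g^*$. First I would instantiate Lemma \ref{lem:char-eg} with $G=F$. That lemma shows $E_F$ consists of exactly two points $g_1^*,g_2^*$ with $g_{1,i}^*=g_{2,i}^*=\bar g_i$ for $i\in F$, and that $g_1^*$ satisfies $g_{\text{eq}}(g_1^*)=g_l$ together with $v(g_1^*)=E/2$. Since $M=\{g:\sum_i g_i<g_l\}$ has closure $\{g:\sum_i g_i\le g_l\}$, the point $g_1^*$ lies on $\partial M\subset\text{cl}(M)$, whereas $g_2^*$ obeys $g_{\text{eq}}(g_2^*)>g_l$ by \eqref{eq:char-g} and hence $g_2^*\notin\text{cl}(M)$. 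This confirms that $E_F\cap\text{cl}(M)=\{g_1^*\}=:\{g^*\}$ is the advertised singleton, and identifies $g^*$ with the boundary equilibrium.

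Next I would compute the induced curtailment in closed form. Because $g_i^*=\bar g_i$ for $i\in F$ and $v(g^*)=E/2$, definition \eqref{eq:DPi} together with \eqref{eq:bar-g} yields
\begin{equation*}
\Delta P_i(g^*)=v^2(g^*)\,\bar g_i-P_{0,i}=\left(\tfrac{E}{2}\right)^2\bar g_i-P_{0,i}=\frac{P_{\text{max}}-P_{0,\text{tot}}}{\gamma_i},\qquad i\in F.
\end{equation*}
I would then check feasibility against the constraint in \eqref{eq:efficiency}: since $\gamma_i=\theta_i\sum_{j\in F}1/\theta_j$, we have $\sum_{i\in F}1/\gamma_i=1$, so $\sum_{i\in F}\Delta P_i(g^*)=P_{\text{max}}-P_{0,\text{tot}}$, exactly as required. (This is the same identity already exploited in the proof of Lemma \ref{lem:char-eg}.)

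Finally I would solve \eqref{eq:efficiency} and match. Forming the Lagrangian $\sum_{i\in F}\frac{\theta_i}{2}(\Delta P_i)^2-\mu\big(\sum_{i\in F}\Delta P_i-(P_{\text{max}}-P_{0,\text{tot}})\big)$, stationarity gives $\theta_i\Delta P_i=\mu$, i.e. $\Delta P_i=\mu/\theta_i$; substituting into the constraint fixes $\mu=(P_{\text{max}}-P_{0,\text{tot}})/\sum_{j\in F}1/\theta_j$, so the optimizer is $\Delta P_i^\star=(P_{\text{max}}-P_{0,\text{tot}})/\gamma_i$. This coincides with $\Delta P_i(g^*)$. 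Because the objective has the positive-definite diagonal Hessian $\text{diag}\{\theta_i\}$ over a nonempty affine feasible set, the KKT point is the unique global minimizer, and therefore $\{\Delta P_i(g^*),i\in F\}$ is optimal for \eqref{eq:efficiency}.

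As for difficulty, the argument is essentially a bookkeeping verification once Lemma \ref{lem:char-eg} is available, so there is no deep obstacle. The only place demanding care is the feasibility identity $\sum_{i\in F}1/\gamma_i=1$: this is precisely the structural reason the controller \eqref{eq:vcsc} is built with the particular $\bar g_i$ of \eqref{eq:bar-g}, and it is what simultaneously makes the equilibrium curtailment feasible for \eqref{eq:efficiency} and equal to its $\theta$-weighted KKT allocation.
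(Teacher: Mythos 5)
Your proof is correct and follows essentially the same route as the paper's: invoke Lemma \ref{lem:char-eg} to get $g_i^*=\bar g_i$ for $i\in F$ and $v(g^*)=E/2$, compute $\Delta P_i(g^*)=(P_{\text{max}}-P_{0,\text{tot}})/\gamma_i$, and match this against the optimum of \eqref{eq:efficiency}. If anything, you are more complete than the paper, which only checks the proportionality $\Delta P_i/\Delta P_j=\theta_j/\theta_i$ and leaves both the feasibility identity $\sum_{i\in F}1/\gamma_i=1$ and the exclusion of $g_2^*$ from $\text{cl}(M)$ implicit, whereas you verify the KKT conditions and the singleton claim explicitly.
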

\begin{proof}
We have shown in Lemma~\ref{lem:char-eg} that there exists $g_1^*\in E_F$ such that $g_{\text{eq}}(g_1^*)=g_l$, i.e. $g_1^*\in E_F\cap \text{cl}(M)$. For this equilibrium, the total power is 
\begin{equation*}
        P_{\text{tot}}(g_1^*)=  v^2(g^*)g_{\text{eq}}(g_1^*)\overset{(\ref{eq:char-ef})}{=}\left(\frac{E}{2}\right)^2g_l \overset{(\ref{eq:Pmax})}{=}P_{\text{max}}
\end{equation*}

For $g^*_1$ we can compute the allocation of the curtailment among loads $i\in F$:
\begin{equation}
    \begin{split}
\Delta P_i(g^*)=& v^2(g^*)\bar{g}_i - P_{0,i} \\
=& \left(\frac{E}{2}\right)^2 \left(\frac{P_{0,i}}{(\frac{E}{2})^2} + \frac{P_{\text{max}} - P_{0,\text{tot}}}{\gamma_i (\frac{E}{2})^2}\right) - P_{0,i} \\
= &\frac{P_{\text{max}} - P_{0,\text{tot}}}{\gamma_i}, \qquad \forall i \in F \\
    \end{split}
\end{equation}

We can easily check that the allocation of the curtailment is proportional to the $\theta$s
\begin{equation}
\begin{split}
\frac{\Delta P_i(g^*)}{\Delta P_j(g^*)}= \frac{\gamma_j}{\gamma_i}=\frac{\theta_j}{\theta_i}, \qquad i,j\in F
    \end{split}
\end{equation}
and thus is an Efficient Allocation.
\end{proof}

\begin{remark}
Theorem \ref{th:efficient curtailment} only guarantees that one of the equilibria of $E_F$ solves \eqref{eq:efficiency}. However, it does not provide any information regarding all the possible additional equilibria $E_G$. We will show that the remaining equilibria either do not exist, are unstable, or do not belong to the monotonicity region $M$.
\end{remark}

We conclude this section showing a case where  $E_G=\emptyset$.

\begin{theorem}[Infeasiblity of $E_G$ under extreme loading]\label{thm:infes-eg}
When $G \subset F \subseteq N$ and $P_{0,\text{tot}}>P_{\text{max}}$, then $E_G=\emptyset$.
\end{theorem}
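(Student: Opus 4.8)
The plan is to read off $E_G$ entirely from the scalar reduction already furnished by Lemma~\ref{lem:char-eg}. A point of $E_G$ fixes $g_i^*=\bar g_i$ for $i\in G$ and forces the aggregate conductance $g^*_{G^c}$ to be a root of the function $f$ in \eqref{eq:eq-cond-g}. Since $f(0)=-P_{0,G^c}<0$, $f\to-P_{0,G^c}<0$ as $g^*_{G^c}\to\infty$, and $f$ is unimodal with a single interior maximum $f_{\max}$ given by \eqref{eq:fmax}, the Intermediate Value Theorem (Lemma~\ref{lem:ivt}) shows that $f$ has a root if and only if $f_{\max}\geq0$. Hence $E_G=\emptyset$ is \emph{equivalent} to the strict inequality $f_{\max}<0$, that is,
\begin{equation*}
P_{0,G^c}>\frac{E^2g_l}{4}\,\frac{g_l}{\bar g_G+g_l}=P_{\max}\,\frac{g_l}{\bar g_G+g_l}.
\end{equation*}
So the whole theorem collapses to verifying this one inequality under the hypotheses $G\subsetneq F$ and $P_{0,\text{tot}}>P_{\max}$.

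Next I would make the inequality explicit by substituting $\bar g_G=\sum_{i\in G}\bar g_i$ from \eqref{eq:bar-g}. Writing $\sigma_G:=\sum_{i\in G}\tfrac{1}{\gamma_i}$ and using the normalization $\sum_{i\in F}\tfrac1{\gamma_i}=1$, the fact that $G$ is a \emph{proper} subset of $F$ yields the strict bound $\sigma_G<1$, which is the only place the hypothesis $G\subsetneq F$ enters. Since $g_l=P_{\max}/(E/2)^2$ by \eqref{eq:Pmax}, a direct computation gives $\bar g_G+g_l=\big(P_{\max}+P_{0,G}-\sigma_G\varepsilon\big)/(E/2)^2$, with $\varepsilon:=P_{0,\text{tot}}-P_{\max}>0$. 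Clearing denominators and using $P_{0,G^c}=P_{0,\text{tot}}-P_{0,G}$, the target inequality reduces to the polynomial condition
\begin{equation*}
\varepsilon\big(P_{\max}+P_{0,G}-\sigma_G\,P_{0,G^c}\big)>P_{0,G}^{\,2}.
\end{equation*}

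The hard part is precisely this last inequality, and I expect it to be the main obstacle: it is \emph{not} an immediate consequence of $\varepsilon>0$ and $\sigma_G<1$ alone, because its right-hand side grows quadratically in the aggregate curtailed demand $P_{0,G}$ while the left-hand side is only linear in it. The argument must therefore use more than positivity of the overload $\varepsilon$; I would try to control $P_{0,G}$ against $\varepsilon$ and $\sigma_G$, guided by the intuition that in $E_G$ every load of $F\setminus G$ is served at its full demand, so the slice $G^c$ is asked to absorb more power than the line can deliver once $\varepsilon$ is large relative to $P_{0,G}$. Concretely, I would attempt to invoke the per-load feasibility $\bar g_i\geq0$ (equivalently $P_{0,i}\geq\varepsilon/\gamma_i$, hence $P_{0,G}\geq\sigma_G\varepsilon$) to bound the quadratic term, and I would flag the regime $P_{0,G}\gtrsim\varepsilon$ as the delicate case to check, since that is exactly where $P_{0,G}^{\,2}$ threatens to overtake the linear left-hand side and where an additional standing assumption on the loading may be required.
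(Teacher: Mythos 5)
Your reduction is correct, and the obstacle you flag at the end is not a deficiency of your argument but a flaw in the theorem itself: the inequality you isolate is false under the stated hypotheses, so no proof can close that gap. Indeed, $E_G\neq\emptyset$ exactly when the unimodal function $f$ of Lemma \ref{lem:char-eg} satisfies $f_{\max}\geq 0$, and your algebra correctly converts $f_{\max}<0$ into
\begin{equation*}
\varepsilon\bigl(P_{\max}+P_{0,G}-\sigma_G P_{0,G^c}\bigr)>P_{0,G}^{2},\qquad \sigma_G:=\sum_{i\in G}\tfrac{1}{\gamma_i}.
\end{equation*}
Now take $E=2$, $g_l=1$ (so $P_{\max}=1$ and $(E/2)^2=1$), $N=F=\{1,2\}$, $I=\emptyset$, $\theta_1=\theta_2=1$ (so $\gamma_1=\gamma_2=2$), $G=\{1\}$, $P_{0,1}=0.5$, $P_{0,2}=0.6$. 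Then $P_{0,\text{tot}}=1.1>P_{\max}$, $\varepsilon=0.1$, $\sigma_G=0.5$, and the inequality reads $0.12>0.25$, which fails. Concretely, $\bar g_1=0.45$, and the condition $\Delta P_2=0$, i.e.\ $4g_2/(g_2+1.45)^2=0.6$, has roots $g_2^*\approx 0.68$ and $g_2^*\approx 3.09$, so $E_{\{1\}}$ contains two points even though $G\subsetneq F$ and the system is overloaded. Note also that $\bar g_1,\bar g_2>0$ here, so the per-load feasibility condition you hoped to invoke cannot rescue the argument; your instinct that the regime $P_{0,G}\gg\varepsilon$ is the breaking point is exactly right.

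For comparison, the paper argues along a different route: it splits into $E_G\cap M$ and $E_G\cap M^c$. The first half is correct, and is the part of the statement that survives: for $g^*\in E_G\cap M$ one has $g^*_{G^c}<g_l-\bar g_G$, monotonicity of $P_{G^c}$ gives $P_{G^c}(g^*)<(E/2)^2(g_l-\bar g_G)$, and substituting \eqref{eq:bar-g} yields $P_{G^c}(g^*)-P_{0,G^c}<(P_{\max}-P_{0,\text{tot}})(1-\sigma_G)<0$, contradicting $\Delta P_i(g^*)=0$ on $G^c$; this is where $G\subsetneq F$ enters, via $1-\sigma_G>0$, just as in your reduction. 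The second half, however, contains an algebra error: the paper expands $\sum_{i\in G}v^2(g^*)\bar g_i$ into an expression that actually equals $\sum_{i\in G}v^2(g^*)\bar g_i-P_{0,G}=\sum_{i\in G}\Delta P_i(g^*)$, i.e.\ it proves the true but harmless statement that the $G$-loads are under-served, not that their power consumption is negative. The correct quantity, $v^2(g^*)\bar g_G$, is nonnegative whenever $\bar g_G\geq 0$ (as in the counterexample above), so no contradiction arises on $M^c$. The honest conclusion is that only $E_G\cap M=\emptyset$ is provable; the equilibria in $E_G\cap M^c$ can exist, and the uniqueness claim in Theorem \ref{thm:stab}(1), which cites this theorem, would have to be repaired by analyzing the stability of those equilibria rather than assuming them away.
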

\begin{proof}
Since $G\subset F$, then there exists a non empty set $I_F \subset F$ such that $I \cup I_F = G^c$ (or $G=F \backslash I_F$). From Lemma \ref{lem:char-eg}: $g^*_G=\sum_{i\in G}\bar{g}_i=\bar{g}_G$. Let $\bar{g}_{G^c}=g_l-\bar{g}_G$. Since $g^*\in M$, it holds that:
$$ g^*_{G^c}+\bar{g}_G<g_l = \bar{g}_G+\bar{g}_{G^c} \Rightarrow g^*_{G^c} < \bar{g}_{G^c}
$$
Notice that since $\bar{g}_{G^c}+\bar{g}_G=g_l$, (\ref{eq:vg}) implies that $v(\bar{g}_{G^c};\bar{g}_G)=\frac{E}{2}$.We will now look at how $P_{G^c}(g^*_{G^c};\bar{g}_{G})$ behaves with respect to $g^*_{G^c}$. Similarly to (\ref{eq:dPsdgs}), $P_{G^c}(g^*_{G^c};\bar{g}_{G})$ is a strictly increasing function for $g^*_{G^c}<g_l+\bar{g}_G$. Therefore:
\begin{equation*}
    \begin{split}
       P_{G^c}(g^*_{G^c};\bar{g}_{G}) -  P_{0,G^c}&    \overset{g^*_{G^c} <\bar{g}_{G^c}}{<} \left( \frac{E}{2} \right)^2 \bar{g}_{G^c} - P_{0,G^c} \\
        &\overset{(\ref{eq:Pmax})}{=} P_{\text{max}} - \left( \frac{E}{2} \right)^2\bar{g}_G - P_{0,G^c} \\
                &\overset{(\ref{eq:bar-g})}{=} (P_{max} \!-\! P_{0,\text{tot}}) \left(1\!-\!\sum_{i\in G} \frac{1}{\gamma_i}\right)\!<\! 0
    \end{split}
\end{equation*}
where in the first equality we have substituted for $\bar{g}_{G^c}=g_l - \bar{g}_G$  and in the last step we have substituted (\ref{eq:bar-g}) and rearranged the terms. However, since $g^*\in G$, then (\ref{eq:eqlbr-G}) and thus $P_{G^c}(g^*_{G^c};\bar{g}_G)=P_{0,G^c}$. The above contradiction implies that $E_G\cap M=\emptyset$.

Let $g^*\in E_G\cap M^c$, then $\sum_{i\in N}g_i^* \geq g_l$ and from (\ref{eq:vg}): $v(g^*) \leq \frac{E}{2}$. Therefore:
\begin{align*}
    & 0 \leq \sum_{i\in G} P_i(g^*) = \sum_{i\in G} v^2(g^*)g_i^* = \sum_{i\in G} v^2(g^*)\bar{g}_i  \\
    &\overset{(\ref{eq:alpha})}{=} \sum_{i\in G} P_{0,i}\left( \left( \frac{v(g^*)}{\frac{E}{2}}\right)^2-1 \right) + \sum_{i\in G} \frac{v^2(g^*)(P_{\text{max}} - P_{0,\text{tot}})}{\gamma_i \left(\frac{E}{2}\right)^2}  \\
    & < 0 
\end{align*}
Hence, if $P_{0,\text{tot}}>P_{\text{max}}$ then $E_G\cap M^c = \emptyset$. Therefore: $E_G = \left(E_G\cap M \right)\cup\left(E_G\cap M^c \right) = \emptyset $. 
\end{proof}

So far we have shown that whenever $P_{0,\text{tot}}>P_{\text{max}}$, the only feasible set $E_G$ is $E_F$, and $E_F$ contains the equilibrium that solves the efficient curtailment problem \eqref{eq:efficiency}. 
The next section will show that this is in fact the only stable equilibrium under extreme loading conditions.

\subsection{Stability Analysis}
In this Section we will study the stability of the different equilibria with the objective of showing that  the chosen controller solves Problem \ref{prob:vcs} and thus qualifies as a VCS Controller. 

The following lemma will be of use in the eigenvalue computation.
\begin{lemma}[Matrix Determinant Lemma \cite{bartlet1951}]\label{lem:mat-det-lem}
If D is an invertible $n\times n$ matrix and $v,w\in \mathbb{R}^n$, then: 
\begin{equation}
\text{det}\left( D + vw^T \right) = (1+w^T D^{-1}v)\text{det}(D)\label{eq:mat-det}
\end{equation}{}
\end{lemma}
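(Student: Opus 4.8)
The plan is to reduce \eqref{eq:mat-det} to a single determinant computation carried out in two different ways. First I would introduce the bordered $(n+1)\times(n+1)$ block matrix
\[
M = \begin{pmatrix} D & -v \\ w^T & 1 \end{pmatrix},
\]
and exploit the fact that its determinant can be evaluated by eliminating either diagonal block via the Schur complement.

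Eliminating the invertible top-left block $D$ gives the scalar Schur complement $1 - w^T D^{-1}(-v) = 1 + w^T D^{-1} v$, so that $\det M = \det(D)\,\bigl(1 + w^T D^{-1} v\bigr)$. Eliminating instead the (trivially invertible) bottom-right $1\times 1$ block gives the matrix Schur complement $D - (-v)(1)^{-1} w^T = D + v w^T$, whence $\det M = \det(1)\cdot\det(D + v w^T) = \det(D + v w^T)$. Equating the two expressions for $\det M$ yields \eqref{eq:mat-det} directly.

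The argument is entirely routine and I do not anticipate any genuine obstacle; the only point demanding care is correctly tracking the sign contributed by the off-diagonal block $-v$ when the Schur-complement determinant formula is applied in each of the two directions. As a self-contained alternative that avoids the bordered matrix, one may instead factor $D + v w^T = D\bigl(\mathbb{I}_n + D^{-1} v\, w^T\bigr)$ and reduce the claim to the rank-one identity $\det\bigl(\mathbb{I}_n + x y^T\bigr) = 1 + y^T x$; this in turn follows by observing that $\mathbb{I}_n + x y^T$ acts as the identity on the hyperplane orthogonal to $y$ (eigenvalue $1$ with multiplicity $n-1$) and scales $x$ by the factor $1 + y^T x$, so that the product of its eigenvalues equals $1 + y^T x$.
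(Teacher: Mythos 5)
Your proof is correct, but there is nothing in the paper to compare it to: the paper states this lemma as a known result, citing Bartlett, and gives no proof of its own---the lemma is simply invoked as a black box to factor $\det(J_C(g)-\lambda\mathbb{I}_n)$ in the proof of Lemma~\ref{lem:vcsc-eigvals}. Your primary argument is the standard one: evaluating the determinant of the bordered matrix $M=\bigl(\begin{smallmatrix} D & -v\\ w^T & 1 \end{smallmatrix}\bigr)$ once via the Schur complement of $D$, giving $\det(D)\,(1+w^TD^{-1}v)$, and once via the Schur complement of the $1\times 1$ block, giving $\det(D+vw^T)$, with the sign of the off-diagonal block $-v$ handled correctly in both reductions; equating the two evaluations proves \eqref{eq:mat-det}, and the argument uses exactly the stated hypothesis that $D$ is invertible. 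The only soft spot is in your self-contained alternative: the eigenvalue count for $\mathbb{I}_n+xy^T$ (eigenvalue $1$ with multiplicity $n-1$ on the hyperplane orthogonal to $y$, eigenvalue $1+y^Tx$ on $x$) is a genuine eigenbasis argument only when $y^Tx\neq 0$ and $x\neq 0$; if $y^Tx=0$ then $x$ lies inside that hyperplane, the matrix is not diagonalizable, and one must instead argue via algebraic multiplicities (e.g., $xy^T$ is nilpotent, so all eigenvalues equal $1$ and the determinant is $1=1+y^Tx$). Since you present that route only as an alternative, this does not affect the validity of your main proof.
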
 

We can now compute the eigenvalues of the Jacobian of (\ref{eq:vcsc})
\begin{lemma}[Computation of Eigenvalues of (\ref{eq:vcsc})]\label{lem:vcsc-eigvals}
Consider the system (\ref{eq:vcsc}). Then, the eigenvalues of its Jacobian $J_C(g)$ at each point $g$ satisfy: 
\begin{equation}\label{eq:eig-jacob-vcsc}
\begin{cases}
    \lambda_i = \Delta P_i(g)\kappa, &\qquad i \in G ,\\
    \lambda_i : c(g,\lambda_i)=0, &\qquad \text{o.w..}
\end{cases}
\end{equation}
where 
\begin{equation}\label{eq:c}
    c(g,\lambda) := \left(1+\frac{2v^{^2}(g)}{g_{\text{eq}}(g)+g_l}  \sum_{i\in N}\frac{a_i(g_i)g_i}{ d_i(g) - \lambda} \right)
\end{equation}
\end{lemma}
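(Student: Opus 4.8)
The plan is to show that the Jacobian $J_C(g)$ of \eqref{eq:vcsc} is a diagonal matrix plus a rank-one correction, and then to extract its eigenvalues via the Matrix Determinant Lemma (Lemma \ref{lem:mat-det-lem}). Writing the dynamics component-wise as $\dot g_i = -\alpha_i(g_i)\Delta P_i(g)$ and using that $\alpha_i$ depends only on $g_i$, the chain rule gives $[J_C]_{ii} = -\alpha_i'(g_i)\Delta P_i(g) - \alpha_i(g_i)\,\partial P_i/\partial g_i$ and $[J_C]_{ij} = -\alpha_i(g_i)\,\partial P_i/\partial g_j$ for $j\neq i$. First I would reuse the partial derivative \eqref{eq:dPidgi} together with the analogous off-diagonal computation to record the compact identity $\partial P/\partial g = v^2(g)\mathbb{I}_n - \tfrac{2v^2(g)}{g_{\text{eq}}(g)+g_l}\,g\,\mathds{1}_n^T$, which is the same rank-one structure already exploited in Lemma \ref{lem:char-stab-reg}.

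Substituting this identity, I would split $J_C(g) = D(g) + u(g)\,\mathds{1}_n^T$, where $D(g)=\operatorname{diag}\{d_i(g)\}$ with $d_i(g) = -\alpha_i'(g_i)\Delta P_i(g) - v^2(g)\alpha_i(g_i)$ and $u(g)=\tfrac{2v^2(g)}{g_{\text{eq}}(g)+g_l}A(g)g$, so that $u_i(g)=\tfrac{2v^2(g)}{g_{\text{eq}}(g)+g_l}\alpha_i(g_i)g_i$ (here $a_i=\alpha_i$). The decisive observation is what happens on the index set $G=\{i:\alpha_i(g_i)=0\}=\{i\in F:g_i=\bar g_i\}$: since $\alpha_i(\bar g_i)=0$, the rank-one vector has $u_i(g)=0$ for every $i\in G$, and since a direct differentiation of \eqref{eq:alpha} gives $\alpha_i'(\bar g_i)=-\kappa$, the corresponding diagonal entry collapses to $d_i(g)=\kappa\,\Delta P_i(g)$. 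Thus the rows indexed by $G$ carry no off-diagonal contribution.

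With this structure in hand I would compute $\det(J_C-\lambda\mathbb{I})=\det\!\big((D-\lambda\mathbb{I})+u\,\mathds{1}_n^T\big)$ and apply Lemma \ref{lem:mat-det-lem} with $D_{\text{MDL}}=D-\lambda\mathbb{I}$ (invertible whenever $\lambda\neq d_i(g)$). This yields $\det(J_C-\lambda\mathbb{I})=\prod_i(d_i(g)-\lambda)\,\big(1+\mathds{1}_n^T(D-\lambda\mathbb{I})^{-1}u\big)$. Because $u_i=0$ on $G$, the sum inside the second factor runs only over $G^c$ and reproduces exactly $c(g,\lambda)$ in \eqref{eq:c}. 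Factoring $\prod_i(d_i-\lambda)=\prod_{i\in G}(d_i-\lambda)\cdot\prod_{i\in G^c}(d_i-\lambda)$ and absorbing the poles of $c$ against $\prod_{i\in G^c}(d_i-\lambda)$, the characteristic polynomial splits as $\prod_{i\in G}(\kappa\Delta P_i(g)-\lambda)$ times a degree-$|G^c|$ polynomial whose roots are precisely the solutions of $c(g,\lambda)=0$. Reading off the roots gives the claimed two cases, and the degrees add up to $n$.

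I expect the main obstacle to be the bookkeeping around the set $G$: one must verify carefully that $u_i$ vanishes there (so those indices decouple and contribute the explicit eigenvalues $\kappa\Delta P_i$), evaluate $\alpha_i'$ correctly at the vanishing point, and handle the degenerate values $\lambda=d_i(g)$ where the invertibility hypothesis of the Matrix Determinant Lemma fails. I would dispatch the latter by a continuity or polynomial-identity argument, since both sides of the factorization are polynomials in $\lambda$ and hence agree everywhere once they agree off the finite set $\{d_i(g)\}$.
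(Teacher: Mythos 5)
Your proposal is correct and follows essentially the same route as the paper's proof: decompose $J_C(g)$ into a diagonal matrix with entries $d_i(g) = -\alpha_i'(g_i)\Delta P_i(g) - v^2(g)\alpha_i(g_i)$ plus the rank-one term $\tfrac{2v^2(g)}{g_{\text{eq}}(g)+g_l}A(g)g\,\mathds{1}_n^T$, apply the Matrix Determinant Lemma to get $\det(J_C-\lambda\mathbb{I}_n)=c(g,\lambda)\det(D(g,\lambda))$, and read off the eigenvalues $\kappa\Delta P_i(g)$ for $i\in G$ and the roots of $c(g,\lambda)$ otherwise. If anything, your write-up is more careful than the paper's: you make explicit that $G$ is the index set where $\alpha_i(g_i)=0$ (so $u_i=0$ there and those rows decouple), that $\alpha_i'(\bar g_i)=-\kappa$ yields $d_i=\kappa\Delta P_i(g)$, and that the factorization must be extended across the degenerate values $\lambda=d_i(g)$ by a polynomial-identity argument, a point the paper glosses over (your decomposition also implicitly fixes the paper's sign slip in writing the rank-one vector as $-\tfrac{2v^2}{g_{\text{eq}}+g_l}A(g)g$, which is inconsistent with the plus sign appearing in $c(g,\lambda)$).
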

\begin{proof}
The Jacobian of this system is
\begin{equation}\label{eq:jacob-vcsc}
J_C(g)= A(g)
\Big(
\frac{2v^2(g)}{g_{\text{eq}}(g)+g_l}g\mathds{1}_n^T - v^2(g)\mathbb{I}_n 
\Big) - D_{\Delta P}(g) D_{\kappa}(g)
\end{equation}
where
\begin{equation*}
D_{\Delta P}(g)=\text{diag}\{ \Delta P(g)\}, \text{ } D_{\kappa}(g)=\text{diag}\left\{  \frac{\partial}{\partial g_i}\alpha_i(g_i) \right\},
\end{equation*}
with $\frac{\partial}{\partial g_i}\alpha_i(g_i)= \begin{cases}
    \frac{-\kappa}{(1+\kappa(\bar{g_i} - g_i))^2} &\qquad \forall i\in F\\
    0   &\qquad \forall i\in I
\end{cases}$.

The eigenvalues of $J_C(g)$ are given as the solution of $\det(J_C(g) - \lambda \mathbb{I}_n)=0$. Notice that $J_C(g) - \lambda \mathbb{I}_n$ is composed by a diagonal matrix
$$D(g,\lambda):=-D_{\Delta P}(g)\Delta_{\kappa}(g) - v^2(g)A(g)- \lambda \mathbb{I}_n$$
plus a rank 1 matrix $-\frac{2v^2}{g_{\text{eq}}+g_l}(A(g)g)\mathds{1}_n^T=vw^T$, with $$v=-\frac{2v^2}{g_{\text{eq}}+g_l}(A(g)g),\quad w=\mathds{1}_n.$$
Moreover, the entries of $D(g,\lambda)$ can be written as $d_i(g) - \lambda$, with $d_i(g) = - a_i(g_i)v^2(g)+\Delta P_i(g)\frac{\kappa}{(1+\kappa (\bar{g_i} - g_i))^2}$. 

Therefore, using \cite{bartlet1951} we can compute  
\begin{align*}
 \det(J_C(g) - \lambda\mathbb{I}_n) = c(g,\lambda) \det ( D(g,\lambda) ),&
%
%
\end{align*}
which implies that either $\lambda_i$ is either equal to $\Delta P_i(g)\kappa$ or is a solution to $c(\lambda)=0$. Result follows. \end{proof}

Having characterized the eigenvalues of $J_C(g^*)$, we now analyze the stability of the equilibria of (\ref{eq:eqlbr-G}). 

\begin{theorem}[Stability of VCS Controller]\label{thm:stab} 
Consider the system \eqref{eq:vcsc}. Then, for $0<\kappa < \infty$ the following  holds:
\begin{itemize}
    \item[(1)\!\!]  When $\sum_{i \in N} P_{0,i} > P_{\text{max}}$, 
the only asymptotically stable equilibrium is given by $g^*\in E_F\cap \text{cl}(M)=\{g^*\}$.
\item[(2)\!\!]  When $\sum_{i \in N} P_{0,i} < P_{\text{max}}$, then the only asymptotically stable equilibrium within the closure of $M$  is given by $g^*\in E_{\emptyset}\cap \text{cl}(M)=\{g^*\}$.
\end{itemize}
\end{theorem}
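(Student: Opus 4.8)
The plan is to enumerate the equilibria, which by \eqref{eq:eqlbr-G} all lie in $\bigcup_{G\subseteq F}E_G$, and to read off the sign of the eigenvalues of the Jacobian $J_C(g^*)$ from Lemma~\ref{lem:vcsc-eigvals}. Two facts will be used throughout: at any $g^*\in E_G$ the eigenvalues are the $|G|$ numbers $\lambda_i=\kappa\,\Delta P_i(g^*)$, $i\in G$, together with the roots of $c(g^*,\lambda)=0$; and, by definition of $E_G$, one has $\alpha_i(g^*)=0$ (so $a_i(g^*)=0$) for $i\in G$ and $\Delta P_i(g^*)=0$ for $i\in G^c$.

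For part (1), assume $P_{0,\text{tot}}>P_{\text{max}}$. Theorem~\ref{thm:infes-eg} makes every $E_G$ with $G\subsetneq F$ empty, so the only equilibria are the two points $g_1^*,g_2^*\in E_F$ of Lemma~\ref{lem:char-eg}, with $g_{\text{eq}}(g_1^*)=g_l$ (so $g_1^*\in\mathrm{cl}(M)$) and $g_{\text{eq}}(g_2^*)>g_l$ (so $g_2^*\notin\mathrm{cl}(M)$). At any point of $E_F$ every $i\in I=G^c$ has $\Delta P_i=0$ and $\alpha_i=1$, so all diagonal entries coincide at $d_i=-v^2(g^*)$; since the terms with $i\in F$ drop out of $c(g^*,\lambda)$, the remaining eigenvalues are $-v^2(g^*)$ with multiplicity $n_I-1$ and the single root $v^2(g^*)\big(\tfrac{2g_I^*}{g_{\text{eq}}(g^*)+g_l}-1\big)$, where $g_I^*=\sum_{i\in I}g_i^*$. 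At $g_1^*$ the $F$-eigenvalues equal $\kappa(P_{\text{max}}-P_{0,\text{tot}})/\gamma_i<0$ by Theorem~\ref{th:efficient curtailment}, and the extra root is negative because $g_I^*=g_l-\bar g_F<g_l$ (writing $\bar g_F=\sum_{i\in F}\bar g_i$, which is positive provided the inflexible demand itself is within capacity, $\sum_{i\in I}P_{0,i}<P_{\text{max}}$); hence $g_1^*$ is asymptotically stable. At $g_2^*$ the same extra root is positive, precisely because Lemma~\ref{lem:char-eg} gives $\sum_{i\in I}g_{2,i}^*>g_l+\bar g_F$, so $g_2^*$ is unstable. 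This settles (1).

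For part (2), assume $P_{0,\text{tot}}<P_{\text{max}}$. First I would rule out every $G\neq\emptyset$: for $g^*\in E_G\cap\mathrm{cl}(M)$ and $i\in G$, the inclusion $g^*\in\mathrm{cl}(M)$ gives $v(g^*)\ge E/2$, so $\Delta P_i(g^*)=v^2(g^*)\bar g_i-P_{0,i}\ge(E/2)^2\bar g_i-P_{0,i}=(P_{\text{max}}-P_{0,\text{tot}})/\gamma_i>0$ and the eigenvalue $\lambda_i=\kappa\,\Delta P_i(g^*)>0$ is destabilizing. Hence the only candidate for stability in $\mathrm{cl}(M)$ is the natural power-flow solution $g_1^*\in E_\emptyset$, which lies in $M$ by Lemma~\ref{lem:char-eg} (the second root $g_2^*$ has $g_{\text{eq}}(g_2^*)>g_l$ and leaves $\mathrm{cl}(M)$).

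The hard part will be showing that this $g_1^*$ is asymptotically stable. Since all $\Delta P_i(g_1^*)=0$, the term $D_{\Delta P}$ in \eqref{eq:jacob-vcsc} vanishes and $J_C(g_1^*)=A(g_1^*)J(g_1^*)$, where $J(g_1^*)$ is the inflexible Jacobian \eqref{eq:dec-jacob}, which is Hurwitz by Lemma~\ref{lem:char-stab-reg} because $g_1^*\in M$. The obstacle is that a product of a positive diagonal matrix and a Hurwitz matrix need not be Hurwitz. I would overcome it in two steps. First, I would prove all eigenvalues of $J_C(g_1^*)$ are real: the characteristic equation is $1+\beta\sum_i b_i/(d_i-\lambda)=0$ with $\beta>0$, real poles $d_i=-a_i v^2(g_1^*)<0$, and weights $b_i=a_i g_{1,i}^*>0$ (here $g_{1,i}^*>0$ and $a_i=\alpha_i(g_{1,i}^*)>0$, the latter since $v(g_1^*)>E/2$ forces $g_{1,i}^*<\bar g_i$); taking the imaginary part gives $\mathrm{Im}(\lambda)\sum_i b_i/|d_i-\lambda|^2=0$, so $\mathrm{Im}(\lambda)=0$. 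Second, I would deform $A_t=(1-t)\mathbb{I}_n+tA(g_1^*)$ for $t\in[0,1]$: each $A_t$ is positive diagonal, so the realness argument holds for every $t$, while $\det(A_tJ(g_1^*))=\det(A_t)\det(J(g_1^*))\neq0$ keeps a fixed sign. As no eigenvalue can leave the negative real axis without passing through $0$, the number of negative eigenvalues is constant along the homotopy and equals $n$ at $t=0$, where $A_0J(g_1^*)=J(g_1^*)$ is Hurwitz. Therefore $J_C(g_1^*)$ is Hurwitz and $g_1^*$ is asymptotically stable, completing (2).
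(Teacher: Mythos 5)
Your proof is correct, and it splits into a part that mirrors the paper and a part that genuinely departs from it. For part (1), and for ruling out $E_G$ with $G\neq\emptyset$ inside $\text{cl}(M)$ in part (2), you follow essentially the paper's route: reduce to $E_F$ (resp.\ $E_\emptyset$), read the $F$-eigenvalues $\kappa\,\Delta P_i<0$ at $g_1^*$ off Lemma~\ref{lem:vcsc-eigvals}, and analyze the $I$-block; you solve that block in closed form (eigenvalue $-v^2$ with multiplicity $n_I-1$ plus the root $v^2\bigl(\tfrac{2g_I^*}{g_{\text{eq}}+g_l}-1\bigr)$), where the paper instead uses monotonicity of $c(g,\lambda)$ and the Intermediate Value Theorem, but this is the same computation. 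Two small remarks there: the paper rules out $E_\emptyset$ via Theorem~\ref{prop:vc} rather than Theorem~\ref{thm:infes-eg} (whose proof's second half, the $M^c$ case, is vacuous for $G=\emptyset$; the cleanest justification is simply $P_{\text{tot}}(g)\le P_{\max}<P_{0,\text{tot}}$), and your explicit caveat that $\bar g_F>0$, i.e.\ $\sum_{i\in I}P_{0,i}<P_{\max}$, is needed for the $I$-block root at $g_1^*$ to be negative is a genuine catch — the paper asserts $c(g_1^*,0)>0$ without comment, and that inequality is exactly this hidden assumption. The real divergence is the stability of $g_1^*\in E_\emptyset\cap M$ in part (2): the paper cites \cite{jakovcevic2015} for realness of the spectrum and for locating $n-1$ roots below $d_M=\max_i\{-\alpha_i(g_i^*)v^2(g^*)\}<0$, then traps the last root in $(d_M,0)$ via the cancellation $c(g^*,0)=1-\tfrac{2g_{\text{eq}}(g^*)}{g_{\text{eq}}(g^*)+g_l}>0$ on $M$; you instead prove realness directly (the imaginary-part argument on $1+\beta\sum_i b_i/(d_i-\lambda)$, which is what the citation hides) and then run the homotopy $A_t=(1-t)\mathbb{I}_n+tA(g_1^*)$, using $\det(A_tJ(g_1^*))\neq 0$, constancy of the negative-eigenvalue count, and Lemma~\ref{lem:char-stab-reg} at $t=0$. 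Your argument is more self-contained and is insensitive to coinciding poles $d_i$ (which the paper's ``$n-1$ roots below $d_M$'' implicitly assumes are distinct), at the price of being longer; the paper's interlacing-plus-IVT argument is shorter and pinpoints the critical eigenvalue explicitly. Both are valid proofs of the theorem.
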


\begin{proof}
%
(1) Let $\sum_{i \in N} P_{0,i} > P_{\text{max}}$. Then, by Proposition \ref{prop:vc} there does not exist $g^*$ such that $\Delta P_i (g^*)=0$, $\forall i\in N$. That is, if $G=\emptyset$ then $E_{\emptyset}=\emptyset$. Moreover, by Theorem \ref{thm:infes-eg}, if $\emptyset\not= G\subset F$, then again $E_G = \emptyset$. Therefore, when $P_{0,\text{tot}}>P_{\text{max}}$, $E_G$ is nonempty only for $G=F$. 

Now for $g^*\in E_F$, Lemma \ref{lem:char-eg} shows that $E_F$ comprises of two equilibria $g^*_1,g^*_2$. We will first examine $g_1^*$. Again, by Lemma \ref{lem:char-eg}, $g_{\text{eq}}(g_1^*)=g_l$ and by Lemma \ref{lem:vcsc-eigvals} the eigenvalues of  $\lambda_i=\Delta P_i(g^*)\kappa$ for all $i\in F$. Therefore, since
\begin{equation*}
\begin{split}
P_i(g_1^*)&=v^2(g_1^*)g_{1,i}^* = \left(\frac{E}{2} \right)^2 \left(\frac{P_{0,i}}{(\frac{E}{2})^2} \!+\! \frac{1}{\gamma_i}\frac{P_{\text{max}} \!-\! P_{0,\text{tot}}}{(\frac{E}{2})^2} \right)  \\
    &=  P_{0,i} + \frac{1}{\gamma_i}(P_{\text{max}} - P_{0,\text{tot}}),
\end{split}
\end{equation*}
it follows that $\lambda_i=\Delta P_i(g_1^*)\kappa =\frac{1}{\gamma_i}(P_{\text{max}} - P_{0,\text{tot}})\kappa<0$, for all $i\in F$. 

The rest of the eigenvalues are computed from (\ref{eq:c}) by substituting $\alpha_i(g_{1,i}^*)=0$ for $i\in F$ and $\alpha_i(g^*_{1,i})=1$ for $i\in I$:
\begin{equation}\label{eq:sec-eq}
    c(g_1^*,\lambda) = 1 + \frac{2v^2(g_1^*)}{g_{\text{eq}}(g_1^*)+g_l}\sum_{i \in I}\frac{g_{1,i}^*}{-v^2(g_1^*)- \lambda} = 0
\end{equation}
It is easy to show, following the analysis of \cite{jakovcevic2015}, that $c(g,\lambda)$ always has real roots.

We will examine the sign of the roots of $c(g^*,\lambda)$ by first looking at its derivative:
\begin{equation*}
    \begin{split}
    \frac{\partial}{\partial \lambda} c(g_1^*,\lambda) = \frac{2v^2(g_1^*)}{g_{\text{eq}}(g_1^*)+g_l}\sum_{i\in I}\frac{g_i^*}{(-v^2(g_1^*)-\lambda)^2} > 0 \\
\end{split}
\end{equation*}
Hence, since the denominator is non-singular for $\lambda\geq0$, $c(g_1^*,\lambda)$ is continuous and strictly increasing for $\lambda \in [0,+\infty)$. Moreover, $c(g_1^*,0) = 1 - \frac{2 \sum_{i \in I}g_{1,i}}{g_{\text{eq}}(g_1^*)+g_l} >0$. Therefore, there does not exist $\lambda \geq 0$ such that $c(g_1^*,\lambda) =0$. Consequently, $\lambda_i<0$ for all $i \in I$ and we have shown above that $\lambda_i<0$ also for all $i\in F$. Therefore, by Lyapunov's Inidrect Method~\cite[Theorem 3.2]{khalil2002nonlinear} $g_1^*\in E_F$ is a stable equilibrium of (\ref{eq:vcsc}). 

The analysis for $g_2^* \in E_F$ is analogous. 
The function $c(g_2^*,\lambda)$ is again continuous and strictly increasing for $\lambda\in[0,+\infty)$.
However, since by Lemma \ref{lem:char-eg}  $g_{I,2}^*>g_l+\bar{g}_{G}$, 
\begin{equation*}
c(g_2^*,0)=1-\frac{2g_{I,2}^*}{g_{\text{eq}}(g_2^*)+g_l}<0, \,\text{while }\; \lim_{\lambda\rightarrow \infty} c(g_2^*,\lambda) = 1.
\end{equation*}
Thus, by Lemma \ref{lem:ivt} there  exists $\bar{\lambda}\in(0,+\infty)$ such that $c(g_2^*,\bar{\lambda})=0$. Since $J_C(g_2^*)$ has at least one positive eigenvalue, using again \cite[Theorem 3.2]{khalil2002nonlinear} we can conclude that $g_2^*$ is unstable.

(2) Let $\sum_{i \in N} P_{0,i} < P_{\text{max}}$ and $g^*\in E_{\emptyset}\cap \text{cl}(M)$. In this case $g_i^*\neq \bar{g}_i$ $\forall i\in N$ (otherwise $g^*\notin E_{\emptyset}$). Therefore $\alpha_i(g_i^*)\neq 0$. From (\ref{eq:eig-jacob-vcsc}) the eigenvalues of the system satisfy: \begin{equation}\label{eq:sec-ef1}
    c(g^*,\lambda)= 1 +\frac{2v^2(g^*)}{g_{\text{eq}}(g^*)+g_l}\sum_{i\in N}\frac{\alpha_i(g_i^*)g_i^*}{d_i(g^*)-\lambda}=0    
\end{equation}
Given that $g^*\in E_{\emptyset}\cap \text{cl}(M)$, then $v(g^*) \geq \frac{E}{2}$ from \eqref{eq:vg}. Therefore:
\begin{equation*}
    g_i^* = \frac{P_{0,i}}{v^2(g^*)} \leq \frac{P_{0,i}}{(\frac{E}{2})^2} \overset{g^*_i \neq \bar{g}_i}{<}  \bar{g}_i\Rightarrow \alpha_i(g^*)>0 
\end{equation*}
From \cite{jakovcevic2015}, when $\alpha_i(g^*)>0$, equation (\ref{eq:sec-ef1}) has $n-1$ roots that satisfy $\lambda_i<\max_i \{ -\alpha_i(g_i^*)v^2(g^*) \}=d_M<0$. For the $n$th eigenvalue, we observe that $c(g^*,\lambda)\in C^{\infty}(d_M,0]$ and: 
\begin{align*}
    & c(g^*,d_M^-)=\lim_{\lambda\rightarrow d_M^-} c(g^*,\lambda) \rightarrow -\infty < 0  \\
    & c(g^*,0)=1-\sum_{i\in N} \frac{2g_i^*}{g_{\text{eq}}(g^*)+g_l}\overset{g\in M}{>} 0
\end{align*}
From Lemma \ref{lem:ivt}, there exists $\lambda_n \in (d_M,0)$ s.t. $c(\lambda_n)=0$. Therefore, the $n$th eigenvalue is also negative and from \cite[Theorem 3.2]{khalil2002nonlinear}, $g^*\in E_{\emptyset}\cap \text{cl}(M)$ is stable. \\
If $g^* \in E_G\cap \text{cl}(M)$ for an arbitrary $G \neq \emptyset$, then the only equilibrium that satisfies this condition is $g_1^* \in E_G$. From (\ref{eq:vg}) $v(g^*) \geq \frac{E}{2}$. Substituting into (\ref{eq:eig-jacob-vcsc}) for $i\in G$:
\begin{equation*}
\begin{split}
\lambda_i=& \Delta P_i(g^*)\kappa=(v^2(g^*)\bar{g}_i - P_{0,i})   \\
=& P_{0,i}\left( \frac{v(g^*)}{(\frac{E}{2})} - 1 \right)^2 + \frac{1}{\gamma_i}\frac{P_{\text{max}} - P_{0,\text{tot}}}{(\frac{E}{2})^2} >0 \end{split}
\end{equation*}
Since there exists at least one positive eigenvalue, \cite[Theorem 3.2]{khalil2002nonlinear} implies that the equilibrium is unstable.   
Therefore, the only stable eigenvalue is $g^*_1\in E_\emptyset\cap \text{cl}(M)$.
\end{proof}
Theorem \ref{thm:stab} shows that \eqref{eq:vcsc} is indeed a Voltage Collapse Stabilizer Controller.

\section{Numerical Illustrations}\label{sec:simulations}

In this section, we validate our theoretical results using numerical illustrations. We consider a DC grid as in Figure \ref{fig:nbs} with three loads. In all the experiments we start the simulations with initial set-points such that  $P_{0,\text{tot}}<P_{\text{max}}$ and with conductances close to the equilibrium $g^*$ where all demands are met, i.e., $g^*\in E_\emptyset\cap M$. Finally, we have chosen $\kappa=10$.

\noindent
\textbf{Case 1 (I=N=\{1,2,3\}):}
Figure \ref{fig:1vc} illustrates the behavior of the system \eqref{eq:infl}-\eqref{eq:fl} consisting of only inflexible loads. We can see that as soon as the aggregate demand reaches $P_{\text{max}}$, the system undergoes a voltage collapse.
\begin{figure}[h!]
  \centering
  	\includegraphics[width=0.4750\textwidth]{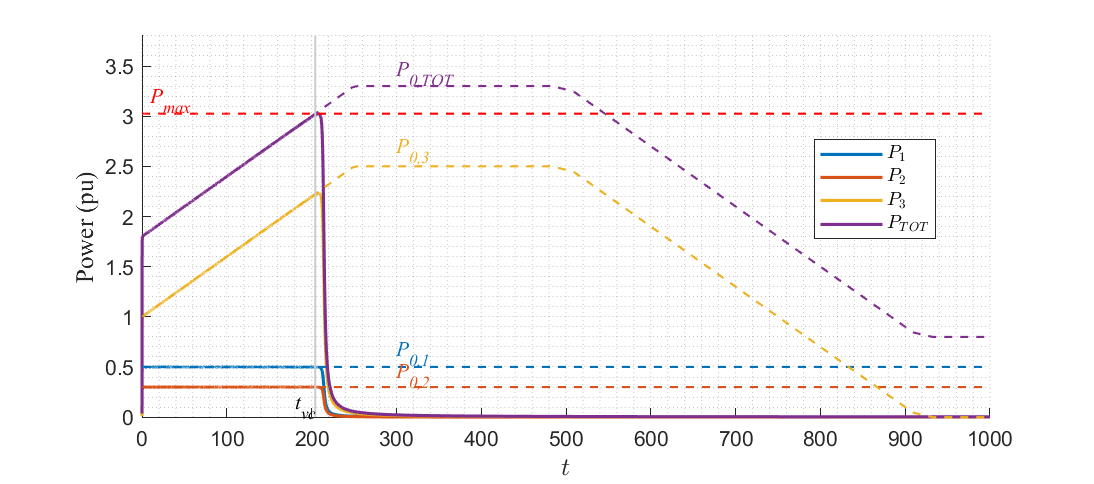}
	\caption{Voltage collapse illustration with inflexible loads.}
    \label{fig:1vc}
\end{figure}

\noindent
\textbf{Case 2 (F=N=\{1,2,3\}):} The case where all loads are flexible is illustrated in Figure \ref{fig:centr}.  In comparison with Case 1, here our VCS controller forces the consumption of \textit{all} loads to adjust proportionally to their assigned weight $\theta_i$, in this way preventing voltage collapse.
\begin{figure}[h!]
  \centering
  	\includegraphics[width=0.475\textwidth]{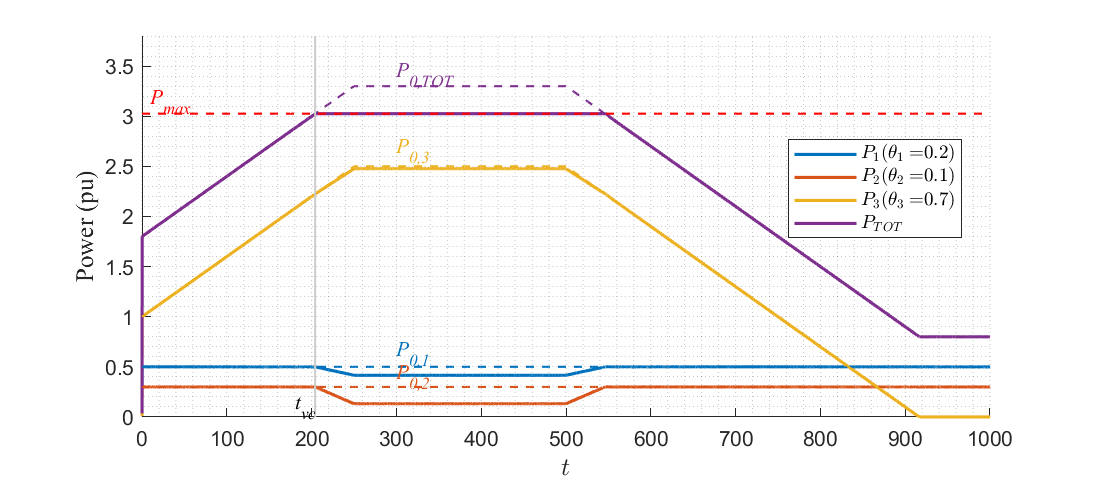}
	\caption{VCS Controller acting on all loads. Supply tracks demand under varying loading.}
    \label{fig:centr}
\end{figure}
 
\noindent
\textbf{Case 3 (F=\{1,2\},I=\{3\}):} 
Finally we illustrate a case with mixed load types where load 3 is inflexible, and our VCS controller is executed in loads 1 and 2. We observe in Figure \ref{fig:semi-centr} that the flexible loads ($1,2$) adjust their demand proportionally to their assigned weights in order to accommodate the increasing demand of the inflexible load, again preventing voltage collapse.  
However, when the system runs out of flexible demand the system will eventually undergo a voltage collapse, as predicted in Lemma \ref{lem:char-eg}.  We observe exactly this behavior in Figure \ref{fig:semi-centr-vc}.

\begin{figure}[ht!]
  \centering
  	\includegraphics[width=0.475\textwidth]{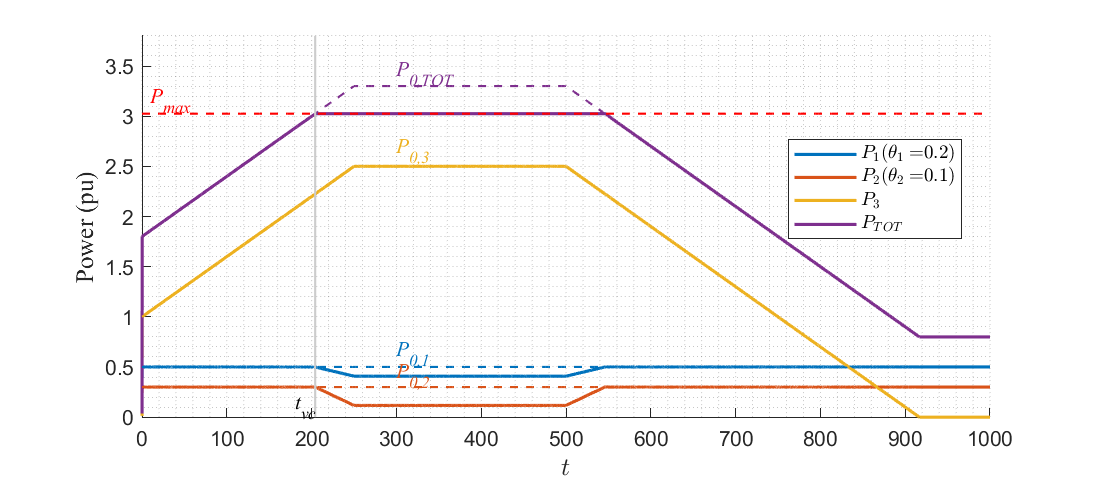}
\caption{Loads 1,2 are flexible and Load 3 is inflexible. Supply adjusts according to the assigned weight $\theta_i$}    \label{fig:semi-centr}
\end{figure}

\begin{figure}[ht!]
  \centering
  	\includegraphics[width=0.475\textwidth]{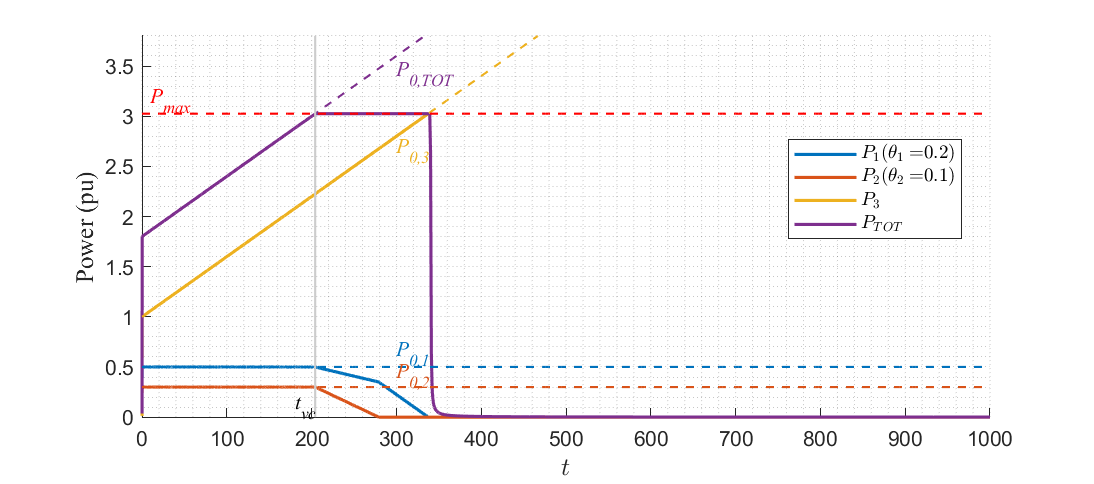}
	\caption{Loads 1,2 are flexible and Load 3 is inflexible. VC happens when $\sum_{i\in I} P_{0,i} > P_{\text{max}}$}
    \label{fig:semi-centr-vc}
\end{figure}

\section{CONCLUSIONS}\label{sec:conclusions}
This work seeks to initiate the study of voltage collapse stabilization as a mechanism to provide a more efficient and reliable operation of electric power grids. 
We develop a game theoretical framework that sheds light on the behavioral mechanism that leads to voltage collapse and suggests the need of cooperation as a means to prevent it. Based on this insight, we propose a Voltage Collpase Stabilizer controller that is able to not only prevent voltage collapse, but also fairly distribute the curtailment among the flexible loads.
Further research needs to be conducted to fully characterize the behavior of our solution. In particular, the point where $P_{0,\text{tot}}=P_{\text{max}}$ is a non-trivial point, where the Jacobian of the system is identically zero and thus requires the treatment of higher order dynamics. We identify two desired extensions of this work that are subject of current research: (a) extending the analysis to a general DC network and (b) extending the analysis to a general AC network.

\bibliographystyle{IEEEtran}
\bibliography{acc2019.bib}

\balance

\end{document}